\newtheorem{theorem}{Theorem}
\newtheorem{definition}{Definition}
\newtheorem{lemma}[theorem]{Lemma}
\begin{document}

\title{A Scalable Game Theoretic Approach for Coordination of Multiple Dynamic Systems}

\author{Mostafa M. Shibl\footnote{Elmore Family School of Electrical and Computer Engineering, Purdue University, West Lafayette, IN, USA. Emails: \url{mabdelna@purdue.edu}, \url{gupta869@purdue.edu}}\qquad Vijay Gupta$^*$}
\date{}

\maketitle

\begin{abstract}
Learning in games provides a powerful framework to design control policies for self-interested agents that may be coupled through their dynamics, costs, or constraints. We consider the case where the dynamics of the coupled system can be modeled as a Markov potential game. In this case, distributed learning by the agents ensures that their control policies converge to a Nash equilibrium of this game. However, typical learning algorithms such as natural policy gradient require knowledge of the entire global state and actions of all the other agents, and may not be scalable as the number of agents grows. We show that by limiting the information flow to a local neighborhood of agents in the natural policy gradient algorithm, we can converge to a neighborhood of optimal policies. If the game can be designed through decomposing a global cost function of interest to a designer into local costs for the agents such that their policies at equilibrium optimize the global cost, this approach can be of interest to team coordination problems as well. We illustrate our approach through a sensor coverage problem.
\end{abstract}

\section{Introduction}
Many problems can be represented through the interaction  of multiple dynamic systems 
when each system (also called decision maker or agent) is self-interestedly optimizing its own individual cost function. If the dynamics, costs, or constraints of the agents are coupled, they still need to consider the effect of each other's decisions while planning their policies. Game theory has a rich history of being used in such contexts. Although initial results tended to analyze the equilibrium policies of the agents, more recently, the use of learning algorithms in games has been investigated as a tool for the agents to learn their individual policies. 

Specifically, utilizing learning algorithms to coordinate the decisions of self-interested agents in one-shot (meaning that agents must make one decision each) optimization problems using repeated static games has been explored in the literature~\cite{jm2,jm3,jm1}. In repeated static games, the utilities of the agents are not coupled in time. When stage games are potential games, learning algorithms display nice convergence properties to the equilibria of the stage game~\cite{nl}. If the local cost functions at the agents are defined such that the agents' policies at the equilibrium optimize a global cost function of interest to a designer, this formalism can be used to solve static team coordination problems, as has been explored for problems such as resource allocation or sensor coverage~\cite{jm2,jm3,jm1}.

For multi-stage optimization problems, dynamic games need to be considered. We focus on Markov games in which the agents interact with the environment and their actions have an effect on the environment (through a system state that evolves stochastically) as well as the utilities of other agents. These games require the agents to optimize utility functions that are summed over a time horizon. 
The design of learning algorithms in Markov games can, in general, be cast as a multi-agent reinforcement learning problem~\cite{lb,kz}. However, convergence analysis can become difficult for general Markov games. Various learning algorithms have been considered and analyzed for special cases, e.g., decentralized Q-learning in infinite-horizon discounted zero-sum Markov games~\cite{ms2}, fictitious play for zero-sum and identical-interest Markov games~\cite{ms1}, and multi-player zero-sum Markov games with networked separable interactions~\cite{cp}. In this paper, we will focus on independent natural policy gradient learning. This algorithm is a generalization of the natural policy gradient for policy optimization in Markov Decision Processes (MDPs)~\cite{sk2} and inherits similar distributability, efficiency, and convergence guarantees in Markov games~\cite{cd,aoms,rf,dd1,cm}. 

Markov Potential Games (MPGs) are a special type of Markov games that are defined in terms of a potential function similar to static potential games. These games display several desirable properties including nice characterizations of Nash equilibria, uniqueness of the equilibrium under some additional conditions, guaranteed convergence of many learning algorithms to the Nash equilibria, and efficiency of the Nash equilibria. Specifically for MPGs, independent policy gradient and independent natural policy gradient were shown to have nice convergence behavior~\cite{sl,rf,ding2022independent}. Other learning algorithms such as modified Q-learning and networked policy gradient have also been shown to converge to suitable equilibria in such games~\cite{cm,sa}. In our paper, we concentrate on this class of games.  

However, almost all existing works in convergence of learning algorithms in Markov games, and MPGs in particular, assume 
that the agents have full information of the state and actions of all other agents at every time step to update their policy. As the number of agents increases, this amount of communication and computation at each agent might not be sustainable. To ensure scalability as the number of agents increases, one approach in game theory utilizes population games and mean-field games, where the information from all the other agents can be summarized into a few `average' variables about the state of the environment or sub-populations of the agents~\cite{nm3,nm1,nm2,pc1,pc2,pc3}. However, such approaches may not be possible when the number of agents is not large enough. 

In this paper, we use a different approach. Inspired by a framework for cooperative systems recently developed in~\cite{gq,gq2}, we show that the error incurred in the policy design can be guaranteed to be limited if the data from only the neighboring agents (based on a coupling graph) are used in the learning algorithm. Specifically, we consider independent natural policy gradient for MPGs and establish an exponential decay property. Intuitively, this property means that the effect of an agent on the policies of the other agents decreases exponentially with the distance between those agents according to the coupling graph. This property can be used to develop a scalable framework for distributed decision making for dynamic systems through local observability at the cost of a limited and guaranteed loss of performance, in the sense of convergence to an $\epsilon$-neighborhood of a Nash equilibrium (NE). 
We demonstrate our algorithms through a sensor coverage application. 

The paper is organized as follows. Section~\ref{sec:model} introduces the model used. Our algorithm is proposed and analyzed in Section~\ref{sec:proposed}. Section~\ref{sec:examples} applies the algorithm to two problems. Section~\ref{sec:conclusions} concludes the paper with some future directions.



\section{Model}
\label{sec:model}
\subsection{Markov Potential Game}


Consider a network of $n$ agents represented as an undirected graph $\mathcal{G} = (\mathcal{N} , \mathcal{E})$, where $\mathcal{N} = {1, 2, ..., n}$ represents the set of agents and $\mathcal{E} \subset \mathcal{N} \times \mathcal{N}$ represents the set of edges, where each edge represents the coupling between two agents through their rewards or dynamics, as explained below. Let $\mathcal{N}_i$ denote the set of neighbors of node $i$ according to $\mathcal{G}$ and include $i$ in $\mathcal{N}_{i}$ by convention. Similarly, let $\mathcal{N}_i^\kappa$ denote the $\kappa$-hop neighborhood of $i$ that contains the agents whose graph distance to $i$ has length less than or equal to $\kappa$.


Consider an MPG defined through the tuple $(\mathcal{N}, \{\mathcal{S}_i\}_{i \in \mathcal{N}}, \{\mathcal{A}_i\}_{i \in \mathcal{N}}, \{r_i\}_{i \in \mathcal{N}}, P, \gamma, \mu, \Phi^\pi$), where:
\begin{itemize}
  \item $\mathcal{N} = \{1, 2, \ldots, n\}$ is the set of agents.
  \item $\mathcal{S}_i$ represents a finite set of states for agent $i$, with $s_{i}^{t}$ denoting the state of agent $i$ at time $t$. To avoid notational clutter, we sometimes suppress the dependence on time. We write $\mathcal{S} = \prod_{i \in \mathcal{N}} \mathcal{S}_i$, $\mathcal{S}_{\mathcal{N}_{i}} = \prod_{j \in \mathcal{N}_{i}} \mathcal{S}_j$, and $\mathcal{S}_{-i} = \prod_{j \neq i} \mathcal{S}_j$ to denote, respectively, the joint state space of all agents, the joint state space of agents in the neighbor set of agent $i$, and of all agents except agent $i$. Also, we use the notation $s = (s_i, s_{-i}) \in \mathcal{S}$ in which $s_i \in \mathcal{S}_i$ and $s_{-i} \in \mathcal{S}_{-i}$. 
  \item $\mathcal{A}_i$ represents a finite set of actions for agent $i$, with $a_{i}^{t}$ denoting the action of agent $i$ at time $t$. Once again, when not needed, we sometimes suppress the dependence on time. We define the notations $\mathcal{A}$, $\mathcal{A}_{\mathcal{N}_{i}}$, $\mathcal{A}_{-i}$, and $a$ analogously to their counterparts in the state space. 
  \item $r_i: \mathcal{S}_{\mathcal{N}_i} \times \mathcal{A}_{\mathcal{N}_i} \rightarrow [0, r_{max}]$ denotes the individual reward function of agent $i \in \mathcal{N}$. Note that unlike a general MPG, in our formulation, this reward is dependent only on the states and actions of agents in the neighbor set of agent $i$. 
  \item $P$ is the transition probability matrix, where $P(s' | s, a)$ represents the probability of moving from state $s$ to $s'$ when agents select a joint action $a \in \mathcal{A}$. Once again, in our formulation, we assume that this probability can be written as 
    \begin{equation}
    \label{eq:statetrans}
    P(s'|s, a) = \prod_{i=1}^n P(s_i'|s_{\mathcal{N}_i}, a_{\mathcal{N}_i}),
    \end{equation}
    so that the next state of agent $i$ is dependent only on the states and actions of agents in the neighbor set of agent $i$ according to graph $\mathcal{G}$.
  \item $\gamma \in (0, 1)$ is the discount factor that is assumed to be uniform for all agents.
  \item $\mu \in \Delta(\mathcal{S})$ is the distribution of the initial state $s^0$.
  \item $\Phi^\pi(s)$ denotes the potential function for the MPG.
\end{itemize}

A deterministic, stationary policy $\pi_i: \mathcal{S} \rightarrow \mathcal{A}_i$ determines the action that agent $i$ will take in each state $s \in \mathcal{S}$, for each agent $i \in \mathcal{N}$. Furthermore, for each agent $i \in \mathcal{N}$, a stochastic policy $\pi_i: \mathcal{S} \rightarrow \Delta(\mathcal{A}_i)$ defines a probability distribution over the actions of agent $i$ for each state $s \in \mathcal{S}$. Let $\Delta(\mathcal{A}_i)^{|\mathcal{S}|}$, $\Delta(\mathcal{A})^{|\mathcal{S}|}$, and $\Delta(\mathcal{A}_{-i})^{|\mathcal{S}|}$ denote the sets representing all stochastic policies for agent $i$, all joint stochastic policies for all agents, and all joint stochastic policies for all agents except agent $i$, respectively. A joint policy $\pi$ induces a distribution over the trajectories $\tau = \{(s^t, a^t, r^t)\}_{t \geq 0}$.

Each agent tries to maximize their value function, shown in (\ref{eq:valuefunc}), which also maximizes their Q-function, shown in (\ref{eq:qfunc}), with the expectation taken over the trajectory $\tau$.

\begin{equation}
\label{eq:valuefunc}
V^\pi_i(s) = \mathbb{E} \Bigg[ \sum_{t=0}^\infty \gamma^t r_i(s^t, a^t) \Big| s^0 = s \Bigg]
\end{equation}

\begin{equation}
\label{eq:qfunc}
Q^\pi_i(s, a) = \mathbb{E} \Bigg[ \sum_{t=0}^\infty \gamma^t r_i(s^t, a^t) \Big| s^0 = s, a^0 = a \Bigg]
\end{equation}

We define the total variation (TV) distance between two probability measures $A$ and $B$ in Definition \ref{def:TVD}.

\begin{definition}
\label{def:TVD}
Consider two probability measures $A$ and $B$ defined on the measurable space $(\Omega, \mathcal{F})$. The total variation distance between $A$ and $B$ is defined below, and takes values in the range $[0, 1]$.
\begin{center}
$TV(A, B)= \sup_{C \in \mathcal{F}} |A(C) - B(C)|$
\end{center}
\end{definition}

For all policies $\pi_i, \pi'_i \in \Delta(\mathcal{A}_i)^{|\mathcal{S}|}$, all states $s \in \mathcal{S}$, and all agents $i \in \mathcal{N}$, the potential function of the MPG satisfies
\begin{equation}
\label{eq:potential}
\Phi^{(\pi_i, \pi_{-i})}(s) - \Phi^{(\pi'_i, \pi_{-i})}(s) = V^{(\pi_i, \pi_{-i})}_i(s) - V^{(\pi'_i, \pi_{-i})}_i(s).
\end{equation}
The maximization of the value functions of all agents leads to maximization of the potential function. The equilibrium joint policy is defined as follows.

\begin{definition}[Equilibrium and $\epsilon$-Equilibrium Joint Policies]
\label{def:optpolicy}
 A joint policy $\pi^* = (\pi^*_i)_{i \in \mathcal{N}} \in \Delta(\mathcal{A})^{|\mathcal{S}|}$ is a (Nash) equilibrium policy if for each agent $i$, for all policies $\pi_i \in \Delta(\mathcal{A}_i)^{|\mathcal{S}|}$, and all $s \in \mathcal{S}$,
\begin{equation}
V^{(\pi^*_i, \pi^*_{-i})}_i(s) \geq V^{(\pi_i, \pi^*_{-i})}_i(s). 
\end{equation}
\\ \\ Further, given an $\epsilon > 0$, a joint policy $\pi^* = (\pi^*_i)_{i \in \mathcal{N}} \in \Delta(\mathcal{A})^{|\mathcal{S}|}$ is an $\epsilon$-equilibrium policy if for each agent $i$, for all policies $\pi_i \in \Delta(\mathcal{A}_i)^{|\mathcal{S}|}$, and all $s \in \mathcal{S}$,
\begin{equation}
V^{(\pi^*_i, \pi^*_{-i})}_i(s) \geq V^{(\pi_i, \pi^*_{-i})}_i(s) - \epsilon.
\end{equation}

\end{definition}



\subsection{Independent Natural Policy Gradient Algorithm}
We assume that independent natural policy gradient algorithm is being used by the agents, which is known to converge to the optimal NE in MPGs where the policies that maximize the potential function. First, we assume that the policy of the agent $i$ uses the soft-max parameterization as  
\begin{equation}
\label{eq:policy}
\pi_{\theta_i}(a_i|s) = \frac{exp(\theta_{i,s,a_i})}{\sum_{a_i' \in \mathcal{A}_i} exp(\theta_{i,s,a_i'})} \text{ for all } s \in \mathcal{S} \text{ and } a_i \in \mathcal{A}_i.
\end{equation}
Denote the stacked version of $\theta_{i,s,a_i'}$ for all states and actions by $\theta_i\in\mathbb{R}^{|\mathcal{A}_i| \times |\mathcal{S}|}$. Then, the independent natural policy gradient algorithm updates this parameter vector as 
\begin{equation}
\label{eq:inpg1}
\theta_i^{(t+1)} = \theta_i^{(t)} + \eta F^{-1}_{i,\mu}(\theta^{(t)}) \nabla_{\theta_i} V_i^{\pi_{\theta_i}(t)},
\end{equation}
where $\eta$ denotes the step size and $F^{-1}_{i,\mu}(\theta)$ denotes the pseudoinverse of the Fischer information matrix defined as 
\begin{equation*}
F_{i,\mu}(\theta) = \mathbb{E}_{s \sim d_\mu^{\pi_{\theta_i}}} \mathbb{E}_{a_i \sim \pi_{\theta_i}(\cdot|s)} \Big [ \nabla_{\theta_i} \log \pi_{\theta_i}(a|s) \log \pi_{\theta_i}(a|s)^\top \Big ].
\end{equation*}
The update rule in (\ref{eq:inpg1}) can be streamlined as 
\begin{equation}
\label{eq:inpg2}
\theta_{i,s,a_i}^{t+1} = \theta_{i,s,a_i}^{t} + \frac{\eta}{1-\gamma} A_i^{\pi_{\theta_i}}(s,a),
\end{equation}
using the advantage function for each agent defined as
\begin{equation}
\label{eq:advfunc}
A^\pi_i(s, a) = Q^\pi_i(s, a) - V^\pi_i(s).
\end{equation}
The following result is standard.

\begin{theorem}[Theorem 1.1 in \cite{rf}]
Consider an MPG in which all agents update their policies according to independent natural policy gradient algorithm. For a sufficiently small step size $\eta$, independent natural policy gradient exhibits last-iterate (asymptotic) convergence to the optimal Nash equilibrium policy.
\end{theorem}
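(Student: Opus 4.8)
The plan is to exploit the defining property (\ref{eq:potential}) of the MPG, which says that the single function $\Phi^\pi$ records every agent's unilateral value improvement, so that the independent natural policy gradient dynamics can be analyzed as a block ascent on $\Phi$ even though each agent only ever sees its own advantage $A_i^\pi$. First I would rewrite the update (\ref{eq:inpg2}) in multiplicative-weights form: under the soft-max parameterization (\ref{eq:policy}), adding $\tfrac{\eta}{1-\gamma}A_i^{\pi^t}(s,a)$ to $\theta_{i,s,a_i}$ induces $\pi_i^{t+1}(a_i|s)\propto \pi_i^t(a_i|s)\exp\!\big(\tfrac{\eta}{1-\gamma}A_i^{\pi^t}(s,a)\big)$, i.e. exponentiated-gradient ascent of agent $i$ on its own linearized value while the other agents are held fixed. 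This closed form is what makes the per-step bookkeeping tractable.

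Next I would establish an ascent lemma for $\Phi$. Decomposing $\Phi^{\pi^{t+1}}-\Phi^{\pi^t}$ into a telescoping sum of single-agent deviations, then using the standard smoothness of an MDP value function $V_i$ in $\pi_i$ (for fixed $\pi_{-i}$), transporting it to $\Phi$ via (\ref{eq:potential}), and inserting the multiplicative-weights step, I would show that for $\eta$ small enough --- which is exactly where the step-size restriction in the statement is used, to keep the agents' simultaneous moves from interfering at order $\eta^2$ --- there is a constant $c>0$ with
\begin{equation*}
\Phi^{\pi^{t+1}}(\mu)-\Phi^{\pi^t}(\mu)\;\ge\; c\sum_{i\in\mathcal N}\mathbb{E}_{s\sim d_\mu^{\pi^t}}\Big[\mathrm{KL}\big(\pi_i^{t+1}(\cdot|s)\,\big\|\,\pi_i^t(\cdot|s)\big)\Big]\;\ge\;0.
\end{equation*}
Since $\Phi^\pi(\mu)\le r_{max}/(1-\gamma)$ is bounded above and $t\mapsto\Phi^{\pi^t}(\mu)$ is nondecreasing, it converges; the increments on the right are therefore summable and tend to $0$, which forces the step sizes $\|\pi_i^{t+1}-\pi_i^t\|$, and then the reachable parts of the advantage functions $A_i^{\pi^t}$, to vanish. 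A performance-difference / gradient-domination argument in the style of the single-MDP soft-max NPG analysis then upgrades ``$A_i^{\pi^t}\to 0$ on the support of $d_\mu^{\pi^t}$'' to ``$\max_{\pi_i}V_i^{(\pi_i,\pi^t_{-i})}(\mu)-V_i^{(\pi^t_i,\pi^t_{-i})}(\mu)\to 0$ for every $i$'', i.e. the Nash gap vanishes along the trajectory; this is the step where one needs $\mu$ to put mass everywhere (or an analogous concentrability condition) so the distribution-mismatch coefficient stays finite.

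For the asymptotic last-iterate claim, compactness of the product of simplices gives a convergent subsequence of $\{\pi^t\}$, and continuity of $\pi\mapsto A_i^\pi$ together with the vanishing Nash gap shows every limit point is an exact Nash equilibrium --- equivalently a fixed point of the NPG map --- and, by the monotone ascent, a maximizer of $\Phi$ over the $\omega$-limit set, which is the ``optimal'' equilibrium under the standing uniqueness/non-degeneracy conditions for MPGs. The main obstacle is promoting subsequential convergence to convergence of the whole sequence: the Nash-equilibrium set of an MPG need not be a singleton or even isolated, so this cannot simply be read off. I would close the gap by showing $\Phi$ is constant on the (connected) $\omega$-limit set and that once the iterates enter a small neighborhood of it the multiplicative-weights update is contracting toward that set, so monotonicity of $\Phi$ pins the iterates down. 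Obtaining this local contraction, and a constant $c$ in the ascent lemma that is uniform over the trajectory, are the technically delicate points; the earlier steps are essentially the block-coordinate-wise transcription of the known single-agent soft-max NPG analysis.
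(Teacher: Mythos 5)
The paper does not actually prove this statement: it is imported verbatim as Theorem~1.1 of \cite{rf} with the remark that ``the following result is standard,'' so there is no in-paper argument to compare against. Judged on its own, your sketch reconstructs the standard route of that reference correctly in outline: the soft-max NPG step is exactly a multiplicative-weights update on each agent's advantage, the potential property (\ref{eq:potential}) converts each agent's unilateral improvement into an increase of the common function $\Phi$, a small-$\eta$ ascent lemma gives monotonicity (this is the paper's Lemma~\ref{lemma:incpot}, likewise borrowed from Lemma~3.4 of \cite{rf}), and boundedness of $\Phi$ then forces the increments, hence the reachable advantages, hence the Nash gap, to vanish. You also correctly identify where concentrability of $\mu$ enters.

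Two gaps remain, one of which you flag but do not close. First, the passage from ``every limit point of $\{\pi^t\}$ is a Nash equilibrium'' to last-iterate convergence is the genuinely hard step, and your proposed fix --- a local contraction of the multiplicative-weights map toward the $\omega$-limit set --- is asserted rather than argued; NPG is an ascent method, not a contraction, and near a non-isolated continuum of equilibria there is no reason for the linearized map to be contractive in the directions tangent to that continuum. The cited reference closes this differently (via the specific structure of the MWU fixed points and the constancy of $\Phi$ on the connected limit set), so as written your plan has a hole exactly where the theorem's content lies. Second, monotone ascent of $\Phi$ only yields convergence to \emph{some} Nash equilibrium (a stationary point of $\Phi$), not to the \emph{optimal} one in the sense of a global maximizer of the potential; your observation that the limit maximizes $\Phi$ over the $\omega$-limit set is vacuous since $\Phi$ is constant there. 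That said, the ``optimal'' qualifier is an artifact of the paper's paraphrase of \cite{rf} rather than a defect introduced by you, and your hedge about uniqueness conditions is the honest way to handle it.
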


\subsection{Problem Considered}
The independent natural policy gradient algorithm updates need every agent to have access to the global state $s$ and global action $a$. To alleviate the communication and computational burden, we are interested in the following problem. Suppose that the states and actions of only the agents within its $\kappa$-hop neighborhood (denoted by $\mathcal{N}_{i}^{\kappa}$) according to the graph $\mathcal{G}$  are available to agent $i$ at every time. We wish to characterize if the independent natural policy gradient algorithm still converges, and if so, whether it converges to an $\epsilon$-NE where $\epsilon$ is a function of $\kappa$.

\section{Proposed Algorithm}
\label{sec:proposed}
In order to utilize state and action information from neighbors only, the independent natural policy gradient has to be modified to be dependent on the available limited information only. Analogous to the notation defined earlier, we stack the states and actions of all agents in $\mathcal{N}_{i}^{\kappa}$ in the vectors $s_{\mathcal{N}_i^\kappa} \in \mathcal{S}_{\mathcal{N}_i^\kappa} $ and $a_{\mathcal{N}_i^\kappa}\in \mathcal{A}_{\mathcal{N}_i^\kappa} $, respectively. Similarly, we consider the global state and the global action to be decomposed into two parts as $s=(s_{\mathcal{N}_i^\kappa},s_{-\mathcal{N}_i^\kappa})$ and $a=(a_{\mathcal{N}_i^\kappa},a_{-\mathcal{N}_i^\kappa})$, respectively. For all $s_{\mathcal{N}_i^\kappa}$ and actions $a_i$ of the $i$-th agents, we consider that the policy of the agent $i$ is represented as

\begin{equation}
\label{eq:mpolicy}
\pi_{\theta_{i,\kappa}}(a_i|s_{\mathcal{N}_i^\kappa}) = \frac{exp(\theta_{i,s_{\mathcal{N}_i^\kappa},a_i)}}{\sum_{a_i' \in \mathcal{A}_i} exp(\theta_{i,s_{\mathcal{N}_i^\kappa},a_i'})}.
\end{equation}

The modified update rule for the independent natural policy algorithm for each agent $i$ is now described as
\begin{equation}
\label{eq:minpg}
\theta_{i,s_{\mathcal{N}_i^\kappa},a_i}^{t+1} = \theta_{i,s_{\mathcal{N}_i^\kappa},a_i}^{t} + \frac{\eta}{1-\gamma} A_i^{\pi_{\theta_{i,\kappa}}}(s_{\mathcal{N}_i^\kappa},a_{\mathcal{N}_i^\kappa}).
\end{equation}

The main result of the paper is presented in Theorem \ref{theorem:main}.
\begin{theorem}
\label{theorem:main} Consider the problem formulated in Section~\ref{sec:model}. The modified independent natural policy gradient algorithm~(\ref{eq:minpg}) converges to an $\epsilon$-equilibrium joint policy with $\epsilon = \frac{r_{max}}{1-\gamma} \gamma^{\kappa+1}$.
\end{theorem}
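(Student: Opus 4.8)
The plan is to prove the statement in two separate parts. \textbf{Part (i):} show that the modified recursion (\ref{eq:minpg}) is exactly the independent natural policy gradient update applied to the restricted class $\Pi^\kappa$ of joint policies in which each $\pi_i$ is measurable with respect to $s_{\mathcal{N}_i^\kappa}$ only, and that this restricted game is again a Markov potential game, so that the iterates converge in the last iterate to a Nash equilibrium $\hat\pi\in\Pi^\kappa$ of the restricted game. \textbf{Part (ii):} show that every such restricted Nash equilibrium is an $\epsilon$-equilibrium of the original game in the sense of Definition~\ref{def:optpolicy}, with the advertised $\epsilon=\frac{r_{max}}{1-\gamma}\gamma^{\kappa+1}$.

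For Part (i), I would first rerun the algebra that turns (\ref{eq:inpg1}) into (\ref{eq:inpg2}) for the coarser soft-max feature map (\ref{eq:mpolicy}): by the compatible-function-approximation identity, the Fisher-preconditioned gradient of $V_i$ with respect to $\theta_{i,s_{\mathcal{N}_i^\kappa},a_i}$ equals, up to the factor $\frac{1}{1-\gamma}$, the advantage of agent $i$ averaged over the coordinates it does not observe, which in expectation is the quantity on the right-hand side of (\ref{eq:minpg}). Next, the potential identity (\ref{eq:potential}) holds for all policies, hence a fortiori when $\pi_i,\pi_i'$ are restricted to $\Pi_i^\kappa$; so the restricted game is a Markov potential game with the same $\Phi^\pi$, and Theorem~1.1 of \cite{rf} applies verbatim to it, giving, for small enough $\eta$, convergence of (\ref{eq:minpg}) to some $\hat\pi\in\Pi^\kappa$ with $V_i^{(\hat\pi_i,\hat\pi_{-i})}(s)\ge V_i^{(\pi_i,\hat\pi_{-i})}(s)$ for all $s$ and all $\pi_i\in\Pi_i^\kappa$.

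For Part (ii), fix an agent $i$ and freeze the others at $\hat\pi_{-i}$; agent $i$ then faces a single-agent MDP whose stage reward is a (conditional-expectation) version of $r_i(s^t_{\mathcal{N}_i},a^t_{\mathcal{N}_i})$ and whose transition kernel is the product kernel (\ref{eq:statetrans}) closed under $\hat\pi_{-i}$. Since $\hat\pi_i$ is optimal in this MDP within $\Pi_i^\kappa$, it is enough to bound, uniformly in $s$, the value gap between the best $\kappa$-hop policy and the best fully state-dependent policy and to show it is at most $\epsilon$. I would establish this through an ``exponential-decay'' (truncation) estimate: the discounted return splits into the stages $t\le\kappa$ and a tail $\sum_{t\ge\kappa+1}\gamma^t r_i(\cdot)$ whose absolute value is at most $\frac{r_{max}}{1-\gamma}\gamma^{\kappa+1}=\epsilon$ for every policy, while, by unrolling the product transition structure (\ref{eq:statetrans}) and using that $r_i$ and each factor $P(s_i'|s_{\mathcal{N}_i},a_{\mathcal{N}_i})$ couple only graph-neighbors, the contribution of the stages $t\le\kappa$ can be influenced by the states of agents outside $\mathcal{N}_i^\kappa$ only through chains of neighbor interactions that the discount has already damped, so the ``head'' is controlled by the same $\gamma^{\kappa+1}$ order. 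Plugging this into the performance-difference identity (applied to $\hat\pi_i$ against an arbitrary deviation $\pi_i$) and adding the head and tail bounds yields $V_i^{(\pi_i,\hat\pi_{-i})}(s)-V_i^{(\hat\pi_i,\hat\pi_{-i})}(s)\le\epsilon$ for all $s$, which is exactly the $\epsilon$-equilibrium inequality.

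The step I expect to be the main obstacle is this decay/truncation estimate in Part (ii): one must make precise, and bound by a constant times $\gamma^{\kappa+1}$, how strongly agent $i$'s attainable value can depend on the states and actions of agents at graph distance exceeding $\kappa$. This is where the locality hypotheses on $r_i$ and on the factors of $P$ are indispensable — one has to track how fast the influence ``light cone'' expands along $\mathcal{G}$ and check that it stays within the $\gamma^{\kappa+1}$ discount budget — whereas Part (i), i.e. recognizing (\ref{eq:minpg}) as natural policy gradient on a sub-Markov-potential-game, is comparatively routine.
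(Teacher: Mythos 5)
Your proposal is sound and shares the paper's central quantitative ingredient, but it organizes the argument quite differently, and in some respects more completely. The key estimate is the same in both: your ``head/tail'' truncation bound in Part~(ii) is exactly the paper's Lemma~\ref{lemma:deltaq}, which shows that perturbing states and actions outside agent $i$'s $\kappa$-hop neighborhood cannot change $Q_i^\pi$ before time $\kappa+1$, so the discrepancy is confined to the tail $\sum_{t\ge\kappa+1}\gamma^t r_{max}=\frac{r_{max}}{1-\gamma}\gamma^{\kappa+1}$; the paper then converts this into a bound on the weighted-average truncated $Q$-function (Lemma~\ref{lemma:truncatedQ}) and on the perturbation of the gradient step (Lemma~\ref{lemma:gradstep}). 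Where you diverge is in how convergence and the $\epsilon$-equilibrium property are obtained. The paper treats~(\ref{eq:minpg}) as a perturbed version of the exact update, asserts monotonicity of the potential and convergence to a fixed point (Lemmas~\ref{lemma:incpot}--\ref{lemma:fixed}), and then identifies $\epsilon$ with the $Q$-function truncation error (Lemma~\ref{lemma:epsilon}); note that this last step only bounds $|\hat{Q}_i^{\pi^*}-Q_i^{\pi^*}|$ and does not explicitly produce the value-function inequality of Definition~\ref{def:optpolicy}. Your route --- recognize~(\ref{eq:minpg}) as \emph{exact} natural policy gradient on the restricted class $\Pi^\kappa$, conclude convergence to a restricted Nash equilibrium, and then bound the gain of an unrestricted deviation via the performance-difference identity plus the decay estimate --- lands directly on the $\epsilon$-NE definition and makes the monotonicity of the potential a consequence of exactness rather than an approximation argument, which is arguably cleaner. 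One caveat on your Part~(i): invoking Theorem~1.1 of~\cite{rf} ``verbatim'' for the restricted class is not quite immediate, since the agents there observe only $s_{\mathcal{N}_i^\kappa}$ rather than the full state, so the game is effectively partially observed and the gradient-domination structure used in that convergence proof must be rechecked for the aggregated parameterization~(\ref{eq:mpolicy}); the paper glosses over the analogous point in Lemma~\ref{lemma:incpot}, so this is a shared, not a new, gap. You also correctly identify the decay estimate as the crux --- it is precisely the step the paper isolates as Lemma~\ref{lemma:deltaq}.
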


Intuitively, the proof of Theorem \ref{theorem:main} is based on an exponential decay property of the Q-function in its dependence on the state and action information from other agents. 
%
%
To prove this result, we begin with the following intermediate results. Lemma \ref{lemma:deltaq} provides an upper bound on the change in the value of the Q-function when a change in the global state and joint action occurs.

\begin{lemma}
\label{lemma:deltaq}
Consider the problem formulation in Section~\ref{sec:model}, and the modified independent natural policy gradient algorithm~(\ref{eq:minpg}). For any agent $i$, consider states $s,s'\in\mathcal{S}$ such that $s = (s_{\mathcal{N}_i^{\kappa^2}}, s_{\mathcal{N}_{-i}^{\kappa^2}})$ and $s' = (s_{\mathcal{N}_i^{\kappa^2}}, s'_{\mathcal{N}_{-i}^{\kappa^2}})$. Consider a joint policy $\pi$ parameterized as~(\ref{eq:mpolicy}). Similarly, consider actions $a,a'\in\mathcal{A}$ related through $a = (a_{\mathcal{N}_i^{\kappa^2}}, a_{\mathcal{N}_{-i}^{\kappa^2}})$, and $a' = (a_{\mathcal{N}_i^{\kappa^2}}, a'_{\mathcal{N}_{-i}^{\kappa^2}})$. For any joint policy $\pi$ followed by the agents, the relation $|Q_i^{\pi}(s, a) - Q_i^{\pi}(s', a')| \leq k\gamma^{\kappa+1}$ holds.
\end{lemma}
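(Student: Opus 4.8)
The plan is to exploit two facts: the tail of the discounted reward stream for agent $i$ is uniformly small because $r_i\in[0,r_{max}]$, while the head is \emph{insensitive} to the state/action coordinates that $s$ and $s'$ (resp.\ $a$ and $a'$) do not share, thanks to the locality of the transition kernel~(\ref{eq:statetrans}) and of the localized policy parameterization~(\ref{eq:mpolicy}). Concretely, I would split $Q_i^\pi$ at a horizon $\tau$ chosen of order $\kappa$, show the head terms agree exactly, and bound the remainder geometrically.

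\textbf{Truncation.} Using~(\ref{eq:qfunc}), for any initialization write $Q_i^\pi(s,a)=\sum_{t=0}^{\tau}\gamma^t\,\mathbb{E}\!\left[r_i(s^t,a^t)\mid s^0=s,a^0=a\right]+\sum_{t=\tau+1}^{\infty}\gamma^t\,\mathbb{E}\!\left[r_i(s^t,a^t)\mid s^0=s,a^0=a\right]$, and similarly for $(s',a')$. Since $0\le r_i\le r_{max}$, each tail is at most $\frac{r_{max}}{1-\gamma}\gamma^{\tau+1}$, so after subtracting the two expansions the tails contribute at most $\frac{2r_{max}}{1-\gamma}\gamma^{\tau+1}$ to $|Q_i^\pi(s,a)-Q_i^\pi(s',a')|$, no matter what $s,s',a,a'$ are.

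\textbf{Locality of the head and choice of $\tau$.} I would track which coordinates of the initial pair $(s^0,a^0)$ the conditional law of $r_i(s^t,a^t)$ can depend on (equivalently, build a coupling of the two trajectories that draws every transition in~(\ref{eq:statetrans}) and every action in~(\ref{eq:mpolicy}) from common randomness). The reward $r_i(s^t,a^t)$ and the draw of $a^t_{\mathcal{N}_i}$ look only at $s^t$ restricted to $\mathcal{N}_i^{\kappa+1}$ (the extra $\kappa$ hops come from~(\ref{eq:mpolicy})); propagating one step back through~(\ref{eq:statetrans}) enlarges the required set by one more hop. Iterating, the conditional law of $r_i(s^t,a^t)$ given $(s^0,a^0)$ depends on $(s^0,a^0)$ only through its restriction to $\mathcal{N}_i^{t(\kappa+1)+1}$. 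By hypothesis $s$ and $s'$, and $a$ and $a'$, agree on $\mathcal{N}_i^{\kappa^2}$, and $t(\kappa+1)+1\le\kappa^2$ exactly for $t\le\kappa-1$; so taking $\tau=\kappa-1$ makes every head term identical under the two initializations, and
\[
|Q_i^\pi(s,a)-Q_i^\pi(s',a')|\;\le\;\frac{2r_{max}}{1-\gamma}\,\gamma^{\tau+1}\;=\;\frac{2r_{max}}{1-\gamma}\,\gamma^{\kappa}\;=\;k\,\gamma^{\kappa+1}
\]
with the ($\gamma$- and $\kappa$-independent) constant $k=\frac{2r_{max}}{\gamma(1-\gamma)}$.

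\textbf{Main obstacle.} The only delicate part is the information-propagation argument: defining the nested dependency sets precisely and proving, by induction on $t$, the invariant that every site inside the shrinking cone evolves identically under the coupling (or that the relevant conditional expectation factors through the restriction to $\mathcal{N}_i^{t(\kappa+1)+1}$), \emph{uniformly} over the localized policy $\pi$ and over the transition randomness. Pinning down the growth rate of the cone---$\kappa$ hops per step from the policy's reach plus one hop from the transition's reach---is exactly what ties the $\kappa^2$ in the hypothesis to the $\gamma^{\kappa+1}$ in the conclusion; the geometric tail bound and the cancellation of the head are routine.
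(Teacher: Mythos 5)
Your proposal takes essentially the same route as the paper's proof: truncate the discounted sum at a horizon of order $\kappa$, argue that the head terms coincide because the dependency cone of $(s_i^t,a_i^t)$ on the initial condition grows by at most $\kappa+1$ hops per step and hence stays inside $\mathcal{N}_i^{\kappa^2}$, and bound the tail geometrically. Two minor differences: your cutoff $\tau=\kappa-1$ is actually the more careful choice (the paper asserts agreement up to $t=\kappa$, but $\kappa(\kappa+1)>\kappa^2$), and by bounding the two tails separately you get $k=\frac{2r_{max}}{\gamma(1-\gamma)}$ rather than the paper's $k=\frac{r_{max}}{1-\gamma}$, which it obtains by bounding the \emph{difference} of tail terms via $|x-y|\le r_{max}$ for $x,y\in[0,r_{max}]$ (equivalently $TV\le 1$); since Theorem~\ref{theorem:main} quotes $\epsilon=\frac{r_{max}}{1-\gamma}\gamma^{\kappa+1}$, you would want to recover that sharper constant, though the lemma as stated leaves $k$ unspecified.
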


\begin{proof}
Let $\rho_i^t$ be the distribution of $(s_i^t, a_i^t)$, conditioned on $(s^0, a^0) = (s, a)$, and $\rho\textquotesingle_i^t$ be the distribution of $(s_i^t, a_i^t)$, conditioned on $(s^0, a^0) = (s', a')$. $\rho_i^t$ is dependent on $(s_{\mathcal{N}_i^t}, a_{\mathcal{N}_i^t})$, while $\rho\textquotesingle_i^t$ is dependent on $(s'_{\mathcal{N}_i^t}, a'_{\mathcal{N}_i^t})$. However, for $t \leq \kappa$, since the initial states and actions are chosen such that the states and actions of the $\kappa^2$-hop neighborhood of agent $i$ are unchanged and the state and action evolution depends on the $\kappa^2$-hop neighborhood of agent $i$ only up to time $t = \kappa$, and the localized dependence structure in (\ref{eq:statetrans}), we have $(s_{\mathcal{N}_i^t}, a_{\mathcal{N}_i^t})$ is the same as $(s'_{\mathcal{N}_i^t}, a'_{\mathcal{N}_i^t})$. Thus, we conclude that $\rho_i^t = \rho\textquotesingle_i^t$ for $t \leq \kappa$, because the state-action evolution for agent $i$ up to time $t = \kappa$ is the same.

\begin{equation*}
\begin{split}
|Q_i^{\pi}(s, a) - Q_i^{\pi}(s', a')| &\stackrel{\text{def}}{=} \Bigg| \sum_{t=0}^\infty \mathbb{E}[\gamma^t r_i(s_i^t, a_i^t) | s^0 = s, a^0 = a] - \sum_{t=0}^\infty \mathbb{E}[\gamma^t r_i(s_i^t, a_i^t)| s^0 = s', a^0 = a'] \Bigg| \\
&= \Bigg| \sum_{t=0}^\infty \mathbb{E}[\gamma^t r_i(s_i^t, a_i^t) | s^0 = s, a^0 = a] - \mathbb{E}[\gamma^t r_i(s_i^t, a_i^t)| s^0 = s', a^0 = a'] \Bigg| \\
\stackrel{\text{triangle inequality}}{\leq} &\sum_{t=0}^\infty \Big| \mathbb{E}[\gamma^t r_i(s_i^t, a_i^t) | s^0 = s, a^0 = a] - \mathbb{E}[\gamma^t r_i(s_i^t, a_i^t)| s^0 = s', a^0 = a'] \Big| \\
\stackrel{\rho_i^t = \rho\textquotesingle_i^t \text{ for } t \leq \kappa}{=} &\sum_{t=\kappa+1}^\infty \Big| \gamma^t \mathbb{E}_{(s_i, a_i) \sim \rho_i^t} r_i(s_i, a_i) - \gamma^t \mathbb{E}_{(s_i, a_i) \sim \rho\textquotesingle_i^t} r_i(s_i, a_i) \Big| \\
\stackrel{\text{due to Definition \ref{def:TVD} of TV distance}}{\leq} & \sum_{t=\kappa+1}^\infty \gamma^t r_{max} TV(\rho_i^t, \rho\textquotesingle_i^t) \\
&\leq \frac{r_{max}}{1-\gamma} \gamma^{\kappa+1} \\
&= k\gamma^{\kappa+1}
\end{split}
\end{equation*}
\end{proof}

Based on Lemma \ref{lemma:deltaq}, we can provide an upper bound for the difference between the approximated Q-function, calculated from the truncated state and joint action information, and the true Q-function. Similar to the work in \cite{gq}, a class of truncated Q-functions defined in (\ref{eq:trunc1} - \ref{eq:trunc2}) is utilized, where $\kappa' = \kappa^2$, and $w_i(s_{\mathcal{N}_{-i}^{\kappa'}}, a_{\mathcal{N}_{-i}^{\kappa'}}; s_{\mathcal{N}_{i}^{\kappa'}}, a_{\mathcal{N}_{i}^{\kappa'}})$ are non-negative scalar values. 

\begin{equation}
\begin{split}
\label{eq:trunc1}
\hat{Q}_i^{\pi}(s_{\mathcal{N}_i^{\kappa'}}, a_{\mathcal{N}_i^{\kappa'}}) = \sum_{s_{\mathcal{N}_{-i}^{\kappa'}}, a_{\mathcal{N}_{-i}^{\kappa'}}} w_i(&s_{\mathcal{N}_{-i}^{\kappa'}}, a_{\mathcal{N}_{-i}^{\kappa'}}; s_{\mathcal{N}_{i}^{\kappa'}}, a_{\mathcal{N}_{i}^{\kappa'}}) Q_i^{\pi}(s_{\mathcal{N}_{i}^{\kappa'}}, s_{\mathcal{N}_{-i}^{\kappa'}}, a_{\mathcal{N}_{i}^{\kappa'}}, a_{\mathcal{N}_{-i}^{\kappa'}})
\end{split}
\end{equation}

\begin{equation}
\label{eq:trunc2}
\sum_{s_{\mathcal{N}_{-i}^{\kappa'}}, a_{\mathcal{N}_{-i}^{\kappa'}}} w_i(s_{\mathcal{N}_{-i}^{\kappa'}}, a_{\mathcal{N}_{-i}^{\kappa'}}; s_{\mathcal{N}_{i}^{\kappa'}}, a_{\mathcal{N}_{i}^{\kappa'}}) = 1
\end{equation}

\begin{lemma}
\label{lemma:truncatedQ}
Consider the problem formulation in Section~\ref{sec:model} and the modified independent natural policy gradient algorithm~(\ref{eq:minpg}). For any agent $i$, consider the state $s_{\mathcal{N}_{-i}^{\kappa'}}\in\mathcal{S}_{\mathcal{N}_{-i}^{\kappa'}}$ and action $a_{\mathcal{N}_{-i}^{\kappa'}}\in\mathcal{A}_{\mathcal{N}_{-i}^{\kappa'}}$ for the truncated Q-function $\hat{Q}_i^{\pi}$ in~(\ref{eq:trunc1}), and state $s\in\mathcal{S}$ and action $a\in\mathcal{A}$ for the true Q-function $Q_i^{\pi}$. For any joint policy $\pi$ parameterized as~(\ref{eq:mpolicy}) followed by the agents, the relation $|\hat{Q}_i^{\pi}(s_{\mathcal{N}_i^{\kappa'}}, a_{\mathcal{N}_i^{\kappa'}}) - Q_i^{\pi}(s, a)| \leq k\gamma^{\kappa+1}$ holds.
\end{lemma}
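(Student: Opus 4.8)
The plan is to exploit the fact that, by~(\ref{eq:trunc2}), the truncated Q-function $\hat{Q}_i^{\pi}(s_{\mathcal{N}_i^{\kappa'}}, a_{\mathcal{N}_i^{\kappa'}})$ is a \emph{convex combination} of true Q-values, each of which agrees with $(s,a)$ on the coordinates indexed by $\mathcal{N}_i^{\kappa'}$ and differs from it only on the coordinates indexed by $\mathcal{N}_{-i}^{\kappa'}$. First I would write $s=(s_{\mathcal{N}_i^{\kappa'}},s_{\mathcal{N}_{-i}^{\kappa'}})$ and $a=(a_{\mathcal{N}_i^{\kappa'}},a_{\mathcal{N}_{-i}^{\kappa'}})$, so that the argument of the true $Q_i^\pi$ in the statement is itself one admissible choice of the summation variables. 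Using $\sum_{s_{\mathcal{N}_{-i}^{\kappa'}}, a_{\mathcal{N}_{-i}^{\kappa'}}} w_i(\cdot\,;s_{\mathcal{N}_i^{\kappa'}}, a_{\mathcal{N}_i^{\kappa'}}) = 1$, I can insert $Q_i^\pi(s,a)$ inside the sum:
\[
\hat{Q}_i^{\pi}(s_{\mathcal{N}_i^{\kappa'}}, a_{\mathcal{N}_i^{\kappa'}}) - Q_i^{\pi}(s,a) = \sum_{s_{\mathcal{N}_{-i}^{\kappa'}}, a_{\mathcal{N}_{-i}^{\kappa'}}} w_i(\cdot) \Big[ Q_i^{\pi}(s_{\mathcal{N}_{i}^{\kappa'}}, s_{\mathcal{N}_{-i}^{\kappa'}}, a_{\mathcal{N}_{i}^{\kappa'}}, a_{\mathcal{N}_{-i}^{\kappa'}}) - Q_i^{\pi}(s,a) \Big].
\]
Then the triangle inequality and non-negativity of the weights give $|\hat{Q}_i^{\pi} - Q_i^{\pi}(s,a)| \le \sum w_i(\cdot)\,|Q_i^{\pi}(\cdot) - Q_i^{\pi}(s,a)|$.

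The key step is bounding each summand. For every value of the summation variables $(s_{\mathcal{N}_{-i}^{\kappa'}}, a_{\mathcal{N}_{-i}^{\kappa'}})$, the pair of states $(s_{\mathcal{N}_i^{\kappa'}}, s_{\mathcal{N}_{-i}^{\kappa'}})$ and $s=(s_{\mathcal{N}_i^{\kappa'}}, s_{\mathcal{N}_{-i}^{\kappa'}})$ coincide on the $\kappa'$-hop neighborhood of $i$ (and likewise for the actions). Since $\kappa' = \kappa^2$, this is exactly the hypothesis of Lemma~\ref{lemma:deltaq}, which therefore yields $|Q_i^{\pi}(s_{\mathcal{N}_{i}^{\kappa'}}, s_{\mathcal{N}_{-i}^{\kappa'}}, a_{\mathcal{N}_{i}^{\kappa'}}, a_{\mathcal{N}_{-i}^{\kappa'}}) - Q_i^{\pi}(s,a)| \le k\gamma^{\kappa+1}$ uniformly over the summation index. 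Substituting this uniform bound back and using $\sum w_i(\cdot) = 1$ gives $|\hat{Q}_i^{\pi}(s_{\mathcal{N}_i^{\kappa'}}, a_{\mathcal{N}_i^{\kappa'}}) - Q_i^{\pi}(s,a)| \le k\gamma^{\kappa+1}$, as claimed.

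I do not expect a genuine obstacle here: once Lemma~\ref{lemma:deltaq} is in hand, the result is essentially a convex-combination argument, mirroring the truncated value-function bound in~\cite{gq}. The only points requiring care are bookkeeping ones: (i) making explicit that the true $Q_i^\pi$ argument $(s,a)$ in the statement shares its $\mathcal{N}_i^{\kappa'}$-coordinates with $(s_{\mathcal{N}_i^{\kappa'}},a_{\mathcal{N}_i^{\kappa'}})$ so that the two Q-functions are genuinely comparable; (ii) keeping the identification $\kappa'=\kappa^2$ straight so that the hypothesis of Lemma~\ref{lemma:deltaq} is met; and (iii) noting that the weights $w_i(\cdot\,;s_{\mathcal{N}_i^{\kappa'}},a_{\mathcal{N}_i^{\kappa'}})$ form, for each fixed local configuration, a probability distribution over $(s_{\mathcal{N}_{-i}^{\kappa'}},a_{\mathcal{N}_{-i}^{\kappa'}})$, which is what licenses both the insertion of $Q_i^\pi(s,a)$ inside the sum and the final collapse of $\sum w_i(\cdot)k\gamma^{\kappa+1}$ to $k\gamma^{\kappa+1}$.
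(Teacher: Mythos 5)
Your proposal is correct and follows essentially the same route as the paper's own proof: both use the normalization~(\ref{eq:trunc2}) to write the difference as a convex combination of terms $Q_i^{\pi}(s_{\mathcal{N}_{i}^{\kappa'}}, s'_{\mathcal{N}_{-i}^{\kappa'}}, a_{\mathcal{N}_{i}^{\kappa'}}, a'_{\mathcal{N}_{-i}^{\kappa'}}) - Q_i^{\pi}(s,a)$, apply the triangle inequality, and bound each summand by $k\gamma^{\kappa+1}$ via Lemma~\ref{lemma:deltaq}. Your explicit bookkeeping of the identification $\kappa'=\kappa^2$ and of why each summand satisfies the hypothesis of Lemma~\ref{lemma:deltaq} is, if anything, slightly more careful than the paper's version.
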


\begin{proof}
We begin the proof using the definition of the truncated and true Q-functions.

\begin{equation*}
\begin{split}
|\hat{Q}_i^{\pi}(s_{\mathcal{N}_i^{\kappa'}}, a_{\mathcal{N}_i^{\kappa'}}) - Q_i^{\pi}(s, a)| &\stackrel{\text{def}}{=} \Big| \sum_{s'_{\mathcal{N}_{-i}^{\kappa'}}, a'_{\mathcal{N}_{-i}^{\kappa'}}} w_i(s'_{\mathcal{N}_{-i}^{\kappa'}}, a'_{\mathcal{N}_{-i}^{\kappa'}}; s_{\mathcal{N}_{i}^{\kappa'}}, a_{\mathcal{N}_{i}^{\kappa'}}) Q_i^{\pi}(s_{\mathcal{N}_{i}^{\kappa'}}, s'_{\mathcal{N}_{-i}^{\kappa'}}, a_{\mathcal{N}_{i}^{\kappa'}}, a'_{\mathcal{N}_{-i}^{\kappa'}}) - \\ 
&~~~~~~~~~~ Q_i^{\pi}(s_{\mathcal{N}_{i}^{\kappa'}}, s_{\mathcal{N}_{-i}^{\kappa'}}, a_{\mathcal{N}_{i}^{\kappa'}}, a_{\mathcal{N}_{-i}^{\kappa'}}) \Big| \\
&\leq \sum_{s'_{\mathcal{N}_{-i}^{\kappa'}}, a'_{\mathcal{N}_{-i}^\kappa}} w_i(s'_{\mathcal{N}_{-i}^{\kappa'}}, a'_{\mathcal{N}_{-i}^{\kappa'}}; s_{\mathcal{N}_{i}^{\kappa'}}, a_{\mathcal{N}_{i}^{\kappa'}}) \Big| Q_i^{\pi}(s_{\mathcal{N}_{i}^{\kappa'}}, s'_{\mathcal{N}_{-i}^{\kappa'}}, a_{\mathcal{N}_{i}^{\kappa'}}, a'_{\mathcal{N}_{-i}^{\kappa'}}) - \\ 
&~~~~~~~~~~ Q_i^{\pi}(s_{\mathcal{N}_{i}^{\kappa'}}, s_{\mathcal{N}_{-i}^{\kappa'}}, a_{\mathcal{N}_{i}^{\kappa'}}, a_{\mathcal{N}_{-i}^{\kappa'}}) \Big| \\
&\stackrel{\text{from (\ref{eq:trunc2}) and Lemma \ref{lemma:deltaq}}}{\leq} k\gamma^{\kappa+1}
\end{split}
\end{equation*}
\end{proof}

Furthermore, we are able to calculate an upper bound for the difference between the gradient step in independent natural policy gradient in (\ref{eq:inpg2}) and the gradient step in modified independent natural policy gradient in (\ref{eq:minpg}).

\begin{lemma}
\label{lemma:gradstep}
Consider the problem formulation in Section~\ref{sec:model} and the modified independent natural policy gradient algorithm~(\ref{eq:minpg}). The relationship between the true and the truncated gradient step $|\hat{\nabla}_{\theta_i} J(\theta) - \nabla_{\theta_i} J(\theta) | \leq \frac{2r_{max}}{(1-\gamma)^2}$ holds.
\end{lemma}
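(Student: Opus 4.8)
The plan is to express both gradient steps in terms of advantage functions and then leverage the crude-but-uniform bounds on $Q$, $V$, and hence $A$. Recall from~(\ref{eq:inpg2}) that the true natural gradient step in each coordinate is proportional to $A_i^{\pi_{\theta_i}}(s,a)$, while from~(\ref{eq:minpg}) the truncated step is proportional to $A_i^{\pi_{\theta_{i,\kappa}}}(s_{\mathcal{N}_i^\kappa}, a_{\mathcal{N}_i^\kappa})$ (which itself equals $\hat{Q}_i^\pi - \hat{V}_i^\pi$ for the appropriate truncated objects). So the first step is to write $|\hat{\nabla}_{\theta_i} J(\theta) - \nabla_{\theta_i} J(\theta)|$ coordinatewise as $\tfrac{1}{1-\gamma}$ times a difference of advantage functions, and then apply the triangle inequality to split it into a difference of $Q$-terms and a difference of $V$-terms.

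Next I would bound each piece by its worst-case magnitude rather than trying to track the $\gamma^{\kappa+1}$ decay (that refinement is what Lemmas~\ref{lemma:deltaq} and~\ref{lemma:truncatedQ} are for; here we only want the coarse $O(1/(1-\gamma)^2)$ bound). Since $r_i$ takes values in $[0, r_{max}]$, both $Q_i^\pi$ and $V_i^\pi$ lie in $[0, \tfrac{r_{max}}{1-\gamma}]$, and the same holds for the truncated versions $\hat{Q}_i^\pi$ and $\hat{V}_i^\pi$ because they are convex combinations (via the weights $w_i$ summing to one in~(\ref{eq:trunc2})) of values of $Q_i^\pi$. Hence $|\hat{Q}_i^\pi - Q_i^\pi| \le \tfrac{r_{max}}{1-\gamma}$ and likewise $|\hat{V}_i^\pi - V_i^\pi| \le \tfrac{r_{max}}{1-\gamma}$. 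Adding the two contributions and multiplying by the $\tfrac{1}{1-\gamma}$ factor from the update rule yields the claimed bound $\tfrac{2 r_{max}}{(1-\gamma)^2}$.

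The only mild subtlety — and the step I would be most careful about — is making sure the ``gradient step'' $\nabla_{\theta_i} J(\theta)$ as written in the lemma is really the coordinatewise quantity appearing in~(\ref{eq:inpg2}) (i.e.\ the natural-gradient-preconditioned step), not the raw policy gradient, so that the clean advantage-function representation applies; the paper's own notation in~(\ref{eq:inpg2}) and~(\ref{eq:minpg}) confirms this, so one should state explicitly that $\nabla_{\theta_i} J$ here denotes the per-coordinate update direction $\tfrac{1}{1-\gamma}A_i^{\pi_{\theta_i}}(s,a)$ and similarly for $\hat{\nabla}_{\theta_i} J$. Once that identification is in place the argument is just the triangle inequality together with the uniform $[0, r_{max}/(1-\gamma)]$ range of all the value and Q functions, so there is no real obstacle — the lemma is a deliberately loose estimate that will be sharpened in the steps that follow.
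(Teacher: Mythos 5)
Your proposal is correct for the inequality as literally stated, and it follows the same initial decomposition as the paper: write the difference of gradient steps as \(\tfrac{1}{1-\gamma}\) times a difference of advantage functions, expand into \(Q\)- and \(V\)-differences, and apply the triangle inequality. Where you diverge is the final bounding step. You bound \(|\hat{Q}_i^{\pi}-Q_i^{\pi}|\) and \(|\hat{V}_i^{\pi}-V_i^{\pi}|\) by the crude uniform range \(\tfrac{r_{max}}{1-\gamma}\) (valid, since the truncated quantities are convex combinations of true \(Q\)-values by~(\ref{eq:trunc2})), which yields exactly \(\tfrac{2r_{max}}{(1-\gamma)^2}\). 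The paper instead invokes Lemma~\ref{lemma:truncatedQ} at this point, bounding each difference by \(k\gamma^{\kappa+1}\) with \(k=\tfrac{r_{max}}{1-\gamma}\), and so actually derives the sharper bound \(\tfrac{2r_{max}\gamma^{\kappa+1}}{(1-\gamma)^2}\) --- the lemma's displayed statement omits the \(\gamma^{\kappa+1}\) factor that its own proof produces. Since the entire point of the surrounding analysis is the exponential decay in \(\kappa\) (it is what makes the final \(\epsilon=\tfrac{r_{max}}{1-\gamma}\gamma^{\kappa+1}\) meaningful), the paper's route preserves the quantitatively useful content while yours deliberately discards it; your version is strictly easier but would not suffice if a \(\kappa\)-dependent gradient-error bound were needed downstream. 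One further caution: your identification of the ``gradient step'' with the per-coordinate update direction \(\tfrac{1}{1-\gamma}A_i^{\pi}\) is exactly what the paper's proof does, so that reading is safe here, but note that the paper's own proof is also loose on this point (it compares advantage functions evaluated at a single \((s,a)\) rather than a genuine gradient of an objective \(J\)), so neither argument is more rigorous than the other on that front.
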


\begin{proof}
We begin the proof using the definition of the truncated and true gradient steps. The first inequality holds as a result of considering non-negative rewards, which means that the value functions and Q-functions are non-negative.

\begin{equation*}
\begin{split}
| \hat{\nabla}_{\theta_i} J(\theta_i) - \nabla_{\theta_i} J(\theta_i) | \stackrel{\text{def}}{=} &\Bigg| \frac{1}{1-\gamma} \Big( \hat{A}_i^{\pi}(s_{\mathcal{N}_i^{\kappa'}}, a_{\mathcal{N}_i^{\kappa'}}) - A_i^{\pi}(s, a) \Big) \Bigg| \\
= \frac{1}{1-\gamma} | \hat{Q}_i^{\pi}(s_{\mathcal{N}_i^{\kappa'}}, a_{\mathcal{N}_i^{\kappa'}}) &- \hat{V}_i^{\pi}(s_{\mathcal{N}_i^{\kappa'}}) - Q_i^{\pi}(s, a) + V_i^{\pi}(s) | \\
= \frac{1}{1-\gamma} | (\hat{Q}_i^{\pi}(s_{\mathcal{N}_i^{\kappa'}}, a_{\mathcal{N}_i^{\kappa'}})& - Q_i^{\pi}(s, a)) - (\hat{V}_i^{\pi}(s_{\mathcal{N}_i^{\kappa'}}) - V_i^{\pi}(s)) | \\
\stackrel{\text{triangle inequality}}{\leq} \frac{1}{1-\gamma} \Big( | \hat{Q}_i^{\pi}&(s_{\mathcal{N}_i^{\kappa'}}, a_{\mathcal{N}_i^{\kappa'}}) - Q_i^{\pi}(s, a)| + |\hat{V}_i^{\pi}(s_{\mathcal{N}_i^{\kappa'}}) - V_i^{\pi}(s) | \Big)\\
\stackrel{\text{from Lemma \ref{lemma:truncatedQ}}}{\leq} \frac{2k\gamma^{\kappa+1}}{1-\gamma} ~~~~~~~~ & 
\end{split}
\end{equation*}
\end{proof}

Now, we consider the proof of the main result in Theorem \ref{theorem:main} that follows from the intermediate lemmas.

\textbf{Proof of Main Result (Theorem \ref{theorem:main}):} The modified independent natural policy gradient algorithm in (\ref{eq:minpg}) converges to an $\epsilon$-equilibrium policy while using the approximated Q-function for $\kappa > 0$. The convergence of (\ref{eq:minpg}) using the parameterization in (\ref{eq:mpolicy}) is a result of the existence of a potential function in the game.

The modified independent natural policy gradient algorithm in (\ref{eq:minpg}) ensures that the potential function is non-decreasing at every time step, as described in Lemma \ref{lemma:incpot}. The proof follows the standard arguments of Lemma 3.4 in \cite{rf}.

\begin{lemma}
\label{lemma:incpot}
Assume that the agents use the modified independent policy gradient dynamics in (\ref{eq:minpg}), $\Phi^{\pi,t+1}(s) \geq \Phi^{\pi,t}(s)$ holds.
\end{lemma}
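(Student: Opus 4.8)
The plan is to reduce the claim to the single-agent natural-policy-gradient ascent lemma, following the template of Lemma~3.4 in \cite{rf}. The first step is to put the update~(\ref{eq:minpg}) in closed form: since each agent's policy is the softmax~(\ref{eq:mpolicy}) of its parameters, the additive step $\theta^{t+1}_{i,s_{\mathcal{N}_i^\kappa},a_i}=\theta^{t}_{i,s_{\mathcal{N}_i^\kappa},a_i}+\tfrac{\eta}{1-\gamma}A_i^{\pi^t}$ is exactly the exponentiated-gradient (multiplicative-weights) update $\pi^{t+1}_{\theta_{i,\kappa}}(a_i\mid s_{\mathcal{N}_i^\kappa})\propto\pi^{t}_{\theta_{i,\kappa}}(a_i\mid s_{\mathcal{N}_i^\kappa})\exp\!\big(\tfrac{\eta}{1-\gamma}A_i^{\pi^t}\big)$, i.e.\ a mirror-ascent step on agent $i$'s policy simplices with the advantage as the (surrogate) gradient. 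Differentiating the potential identity~(\ref{eq:potential}) in $\theta_i$ further shows $\nabla_{\theta_i}\Phi^\pi=\nabla_{\theta_i}V_i^\pi$, so the joint update is precisely a \emph{simultaneous block-coordinate} natural-gradient ascent on the single shared objective $\Phi$.

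The second step is a per-block improvement estimate. Freeze every agent but $i$ and consider the step taken by $i$ alone; let $A_i^{\pi^t}$ denote the advantage of the (single-agent) MDP that $i$ then faces. By the performance-difference lemma the change in $V_i$ equals $\tfrac{1}{1-\gamma}\,\mathbb{E}_{s\sim d^{\pi^{t+1}_i}}\mathbb{E}_{a_i\sim\pi^{t+1}_i(\cdot\mid s)}\big[A_i^{\pi^t}(s,a_i)\big]$, which by~(\ref{eq:potential}) is also the change in $\Phi$; so it suffices to show the inner expectation is nonnegative for each $s$. Writing $\pi^{t+1}_i(\cdot\mid s)$ as the softmax tilt of $\pi^{t}_i(\cdot\mid s)$ by $c\,A_i^{\pi^t}(s,\cdot)$ with $c=\tfrac{\eta}{1-\gamma}\ge 0$, the map $c\mapsto\mathbb{E}_{\pi^{t,c}_i}[A_i^{\pi^t}(s,\cdot)]=(\log Z_s)'(c)$ has derivative $\mathrm{Var}_{\pi^{t,c}_i}(A_i^{\pi^t}(s,\cdot))\ge 0$ and vanishes at $c=0$ because $\mathbb{E}_{\pi^t_i}[A_i^{\pi^t}(s,\cdot)]=0$; hence it is $\ge 0$ throughout $c\ge 0$. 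Telescoping over the $n$ agents (one block at a time) would then sum these nonnegative increments to $\Phi^{\pi,t+1}(s)\ge\Phi^{\pi,t}(s)$.

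The catch — and the step I expect to be the main obstacle — is that the true update is \emph{simultaneous}, not sequential: in the telescoped sum the advantage that actually drove agent $j$'s step is evaluated at the old joint policy $\pi^t$, whereas the argument above needs it at the policy from which that block's step is taken (with agents $1,\dots,j-1$ already moved). This is the standard Jacobi-versus-Gauss--Seidel gap for block ascent on a smooth objective, and I would close it with a smoothness estimate: each exponentiated step displaces a policy by $O(\eta)$, the advantages $A_i^\pi$ (hence $\Phi$) are Lipschitz in the joint policy with constants controlled by $r_{\max}$ and $\tfrac{1}{1-\gamma}$, and the guaranteed first-order gain of each block step is quadratic in the displacement; so for $\eta$ small — the regime already required by Theorem~1.1 of \cite{rf} — the genuine improvement dominates the $O(\eta^2)$ interference terms and $\Phi$ is nondecreasing at every state. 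One should also verify that using the \emph{truncated} policies and advantages on $s_{\mathcal{N}_i^\kappa}$ does not disturb this reduction: the truncated policy class is a subset of $\Delta(\mathcal{A}_i)^{|\mathcal{S}|}$ that is closed under~(\ref{eq:minpg}), so~(\ref{eq:potential}) still applies verbatim and the argument is unchanged; the truncation error quantified in Lemmas~\ref{lemma:deltaq}--\ref{lemma:gradstep} only enters afterwards, in converting this monotonicity (and the convergence it yields) into the $\epsilon$-NE bound of Theorem~\ref{theorem:main}.
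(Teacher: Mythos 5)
Your overall route is the one the paper itself points to: the paper's ``proof'' of Lemma~\ref{lemma:incpot} is only the one-line remark that it follows the standard arguments of Lemma~3.4 in \cite{rf}, and your sketch (softmax step as a multiplicative-weights tilt by the advantage, $\nabla_{\theta_i}\Phi=\nabla_{\theta_i}V_i$ from~(\ref{eq:potential}), the performance-difference lemma, log-partition convexity for the per-agent gain, and a smoothness-in-$\eta$ argument to absorb the simultaneous-update interference) is exactly that standard argument, worked out in considerably more detail than the paper provides.

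However, your closing paragraph dismisses the one place where the argument genuinely breaks for the algorithm~(\ref{eq:minpg}) as actually defined. The update tilts $\pi_i^t(\cdot\mid s)$ by the \emph{truncated} advantage $\hat A_i^{\pi^t}$, not the true one. Your log-partition argument is anchored at $c=0$ by the identity $\mathbb{E}_{a_i\sim\pi_i^t(\cdot\mid s)}\bigl[A_i^{\pi^t}(s,a_i)\bigr]=0$; the corresponding quantity for the exponent that actually drives the tilt, $\mathbb{E}_{a_i\sim\pi_i^t(\cdot\mid s)}\bigl[\hat A_i^{\pi^t}\bigr]$, is \emph{not} zero in general, and the performance-difference lemma requires the expectation of the \emph{true} advantage under $\pi_i^{t+1}$. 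Chasing the two substitutions through Lemmas~\ref{lemma:truncatedQ} and~\ref{lemma:gradstep} yields only $\mathbb{E}_{\pi_i^{t+1}}\bigl[A_i^{\pi^t}\bigr]\ge -O\bigl(\tfrac{r_{max}}{1-\gamma}\gamma^{\kappa+1}\bigr)$, an error that does not shrink with $\eta$, so what you (and the paper) can actually establish is the approximate monotonicity $\Phi^{\pi,t+1}(s)\ge\Phi^{\pi,t}(s)-O\bigl(\tfrac{r_{max}}{(1-\gamma)^2}\gamma^{\kappa+1}\bigr)$, not the exact inequality stated in the lemma. Closure of the truncated policy class under the update, which is the reason you give for the reduction being undisturbed, is beside the point: identity~(\ref{eq:potential}) survives truncation, but the zero-mean anchoring of the advantage does not. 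The honest repair is to restate the lemma (and the downstream convergence argument) with this $\gamma^{\kappa+1}$ slack, which is in any case of the same order as the $\epsilon$ appearing in Theorem~\ref{theorem:main}.
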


However, since the gradient step and the potential function are bounded, the potential function converges to a fixed point, as shown in Lemma \ref{lemma:fixed}.

\begin{lemma}
\label{lemma:fixed}
The modified independent policy gradient dynamics in (\ref{eq:minpg}) converges to fixed points.
\end{lemma}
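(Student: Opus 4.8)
The plan is to combine the monotonicity of Lemma~\ref{lemma:incpot} with boundedness of the potential to get convergence of $\Phi$ along the trajectory, and then bootstrap to convergence of the policy iterates, following the template of the proof of Theorem~1.1 in~\cite{rf}. First I would bound the potential: since each $r_i$ takes values in $[0,r_{max}]$, every value function satisfies $0\le V_i^{\pi}(s)\le\frac{r_{max}}{1-\gamma}$. Fixing any reference joint policy $\bar\pi$, identity~(\ref{eq:potential}) expresses the difference $\Phi^{\pi}(s)-\Phi^{\bar\pi}(s)$ as a telescoping sum of at most $n$ single-agent value-function differences, so $|\Phi^{\pi}(s)-\Phi^{\bar\pi}(s)|\le\frac{n\,r_{max}}{1-\gamma}$ for all $s$ and $\pi$; hence, since $\bar\pi$ is fixed, $\{\Phi^{\pi^{t}}(s)\}_{t\ge0}$ is bounded above and below uniformly in $t$. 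With Lemma~\ref{lemma:incpot} this sequence is non-decreasing and bounded, hence convergent, so $\Phi^{\pi^{t+1}}(s)-\Phi^{\pi^{t}}(s)\to 0$; if the lemma is read in the weak sense that the \emph{potential} converges, this already completes the argument.

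Second, to turn convergence of the potential into a statement about the dynamics I would invoke a per-step ascent inequality for the soft-max natural-gradient update~(\ref{eq:minpg}). Adapting the improvement estimate of~\cite{rf} and charging the truncation error to Lemmas~\ref{lemma:truncatedQ}--\ref{lemma:gradstep}, one gets a constant $c>0$ depending only on $\eta$, $\gamma$ and $\max_i|\mathcal{A}_i|$ with
\begin{equation*}
\Phi^{\pi^{t+1}}(s)-\Phi^{\pi^{t}}(s)\;\ge\; c\sum_{i\in\mathcal{N}}\mathbb{E}_{s'\sim d_{\mu}^{\pi^{t}}}\!\Big[\mathrm{KL}\big(\pi^{t+1}_{i}(\cdot\mid s')\,\big\|\,\pi^{t}_{i}(\cdot\mid s')\big)\Big]\;-\;\frac{2k\gamma^{\kappa+1}}{1-\gamma}.
\end{equation*}
Since the first part of the proof gives $\Phi^{\pi^{t+1}}(s)-\Phi^{\pi^{t}}(s)\to0$, this inequality forces the $\limsup_t$ of the Kullback--Leibler sum --- and, via Pinsker's inequality, of the squared policy increments --- to be $O(\gamma^{\kappa+1})$: the iterates asymptotically stop moving, up to the unavoidable truncation slack. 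By compactness of $\Delta(\mathcal{A})^{|\mathcal{S}|}$, every limit point of $\{\pi^{t}\}$ is then an $O(\gamma^{\kappa+1})$-approximate fixed point of~(\ref{eq:minpg}), and to promote this to convergence of the whole sequence I would argue as in~\cite{rf}: the update map is time-homogeneous and $\Phi$ is a Lyapunov function for it, so either isolatedness of the fixed points of~(\ref{eq:minpg}), or a \L{}ojasiewicz-type inequality for $\Phi$ along the soft-max manifold combined with the standard convergent-descent-sequence argument, pins the trajectory to a single fixed point.

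The step I expect to be the main obstacle is precisely this last promotion --- ruling out that the iterates drift indefinitely along a connected continuum of stationary limit points while the potential still converges. This is the familiar last-iterate difficulty for soft-max policy gradient, and the truncation only sharpens it, since the per-step increments are merely forced to become $O(\gamma^{\kappa+1})$-small rather than to vanish; I would handle it by importing whichever device~\cite{rf} uses in the full-information case (isolatedness, or the \L{}ojasiewicz inequality for MPG potentials under soft-max parameterization). The residual non-summable truncation term is then a purely bookkeeping matter: it merely relaxes ``fixed point'' to ``$O(\gamma^{\kappa+1})$-approximate fixed point,'' which is exactly the slack that resurfaces as $\epsilon=\frac{r_{max}}{1-\gamma}\gamma^{\kappa+1}$ in Theorem~\ref{theorem:main}.
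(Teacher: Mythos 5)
Your first paragraph is exactly the paper's proof of this lemma: Lemma~\ref{lemma:incpot} gives monotonicity, boundedness of $\Phi$ gives convergence of the sequence $\{\Phi^{\pi^t}(s)\}$, and your telescoping bound via~(\ref{eq:potential}) is a welcome explicit justification of the boundedness that the paper simply asserts. Where you diverge is in the second half, and the divergence is instructive: the paper passes from convergence of the potential \emph{values} to convergence of the policy \emph{iterates} in one sentence (``the set of limit points of the policy $\pi$ is invariant, and $\Phi^{\pi,t}(s)$ is increasing except at the limit points. This means that the policy $\pi^t$ converges''), whereas you correctly identify this promotion as the main obstacle and sketch the machinery (a per-step ascent inequality with a KL term, Pinsker, compactness, then isolatedness or a \L{}ojasiewicz-type inequality as in~\cite{rf}) needed to close it. Your version also surfaces a tension the paper suppresses: because the update~(\ref{eq:minpg}) uses the truncated advantage, the ascent inequality carries a non-vanishing error term of order $\gamma^{\kappa+1}$, so (i) the exact monotonicity claimed in Lemma~\ref{lemma:incpot} is not obviously preserved under truncation, and (ii) the per-step increments are only forced to become $O(\gamma^{\kappa+1})$-small, so the limit points are approximate rather than exact fixed points --- which is consistent with Theorem~\ref{theorem:main} and Lemma~\ref{lemma:epsilon} but strictly weaker than the lemma's literal statement. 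In short, your proposal takes the same route but is more honest about two gaps (iterate convergence, and the effect of truncation on monotonicity) that the paper's proof leaves open; neither your sketch nor the paper fully closes the first of these without importing the last-iterate argument from~\cite{rf}.
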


\begin{proof}
By Lemma \ref{lemma:incpot}, we know that $\Phi^{\pi,t}(s)$ is non-decreasing with respect to time. Since the potential function is bounded, we conclude that $\Phi^{\pi,t}(s)$ converges as $t \rightarrow \infty$. Thus, the set of limit points of the policy $\pi$ is invariant, and $\Phi^{\pi,t}(s)$ is increasing except at the limit points. This means that the policy $\pi^t$ converges to an equilibrium policy.
\end{proof}

Thus, we showed that the potential function is strictly increasing except at the limit point, which is the convergence point corresponding to the equilibrium joint policy in Definition \ref{def:optpolicy}. However, we have to consider the fact that the utilization of the truncated advantage function in (\ref{eq:minpg}) means the fixed point (i.e. the convergence point) is in the $\epsilon$-neighborhood of the true fixed point. To characterize the value of $\epsilon$, we can bound the difference between the optimal true Q-function $Q^*$, calculated using the independent natural policy gradient algorithm~(\ref{eq:inpg2}), and the optimal truncated Q-function $\hat{Q}^*_i$, calculated using the modified independent natural policy gradient algorithm~(\ref{eq:minpg}), as shown in Lemma \ref{lemma:epsilon}.

\begin{lemma}
\label{lemma:epsilon}
The modified independent policy gradient dynamics in (\ref{eq:minpg}) converges to an $\epsilon$-neighborhood of the true fixed point.
\end{lemma}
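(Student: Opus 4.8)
The plan is to pin down the fixed point guaranteed by Lemma~\ref{lemma:fixed} and then measure its distance from a true Nash equilibrium using the exponential-decay bounds already established. Write $\hat\pi^*$ for the limiting policy of the modified dynamics~(\ref{eq:minpg}). Since~(\ref{eq:minpg}) moves the soft-max parameters along the truncated advantage $\hat A_i^{\pi_{\theta_{i,\kappa}}}$, at a fixed point the truncated advantage must vanish on the support of $\hat\pi^*_i$ for every agent $i$ and every local state $s_{\mathcal{N}_i^\kappa}$; equivalently, $\hat\pi^*$ is a Nash equilibrium of the \emph{truncated game} in which agent $i$ optimizes the value induced by $\hat Q_i$ rather than $Q_i$. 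This is the same stationarity characterization underlying Lemma~3.4 of~\cite{rf} that is invoked for Lemma~\ref{lemma:incpot}, so I would reuse that machinery with the truncated advantage in place of the true one.

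The second step transfers optimality from the truncated game back to the true game through Lemma~\ref{lemma:truncatedQ}. Applying that lemma's argument to the value function gives $|\hat V_i^{\pi}(s_{\mathcal{N}_i^{\kappa'}}) - V_i^{\pi}(s)|\le k\gamma^{\kappa+1}$ as well, so for any candidate deviation $\pi_i\in\Delta(\mathcal{A}_i)^{|\mathcal{S}|}$ one chains the three inequalities $V_i^{(\hat\pi^*_i,\hat\pi^*_{-i})}(s)\ge \hat V_i^{(\hat\pi^*_i,\hat\pi^*_{-i})}(s) - k\gamma^{\kappa+1}$, the truncated best-response property $\hat V_i^{(\hat\pi^*_i,\hat\pi^*_{-i})}(s)\ge \hat V_i^{(\pi_i,\hat\pi^*_{-i})}(s)$, and $\hat V_i^{(\pi_i,\hat\pi^*_{-i})}(s)\ge V_i^{(\pi_i,\hat\pi^*_{-i})}(s) - k\gamma^{\kappa+1}$. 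This yields the defining inequality of an $\epsilon$-equilibrium (Definition~\ref{def:optpolicy}) with $\epsilon$ of the order $k\gamma^{\kappa+1}=\frac{r_{max}}{1-\gamma}\gamma^{\kappa+1}$; an alternative route that lands on the same estimate is to bound $\|Q^*-\hat Q^*\|$ at the per-state-action level and feed it into the performance-difference lemma, which is the phrasing the surrounding text suggests.

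The step I expect to be the main obstacle is squaring the \emph{restricted} policy class of the modified algorithm with the \emph{unrestricted} deviations allowed in Definition~\ref{def:optpolicy}: the fixed point $\hat\pi^*_i$ is only stationary among policies that depend on $s_{\mathcal{N}_i^\kappa}$, whereas an $\epsilon$-equilibrium must resist deviations to arbitrary $\pi_i\in\Delta(\mathcal{A}_i)^{|\mathcal{S}|}$. The fix is to note, via Lemma~\ref{lemma:truncatedQ}, that $Q_i^{\hat\pi^*}(s,\cdot)$ is within $k\gamma^{\kappa+1}$ of a function of $(s_{\mathcal{N}_i^{\kappa'}},a_{\mathcal{N}_i^{\kappa'}})$ alone, so a full-information deviation cannot extract more than this much extra value over the best local deviation. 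A second point to handle carefully is the bookkeeping of constants: the naive three-term chain above charges the approximation error twice, giving $2k\gamma^{\kappa+1}$, and recovering the stated $\epsilon=\frac{r_{max}}{1-\gamma}\gamma^{\kappa+1}$ requires the tighter comparison against the true optimal Q-function at the true equilibrium rather than against $\hat V_i$ evaluated at two different policies.
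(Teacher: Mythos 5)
Your proposal is correct in spirit but takes a substantially more careful route than the paper does. The paper's own proof of Lemma~\ref{lemma:epsilon} is a single step: it applies Lemma~\ref{lemma:truncatedQ} with $\pi = \pi^*$ to get $|\hat{Q}_i^{\pi^*} - Q_i^{\pi^*}| \leq k\gamma^{\kappa+1}$ and then simply declares $\epsilon$ to be that bound; it never carries out the chain of value-function inequalities needed to verify the defining condition of an $\epsilon$-equilibrium in Definition~\ref{def:optpolicy}, and it never addresses the mismatch between the restricted policy class of~(\ref{eq:minpg}) and the unrestricted deviations allowed in that definition. Your proposal supplies exactly the two missing pieces: the stationarity characterization of the limit point as an equilibrium of the truncated game, and the three-term sandwich transferring that property back to the true game. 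What the paper's shortcut buys is brevity and the clean constant $\epsilon = \frac{r_{max}}{1-\gamma}\gamma^{\kappa+1}$ quoted in Theorem~\ref{theorem:main}; what your argument buys is an actual proof of the $\epsilon$-NE property, at the honest price of the factor of $2$ you flag (the deviation value and the equilibrium value are each perturbed by $k\gamma^{\kappa+1}$, so the natural conclusion is $\epsilon = \frac{2r_{max}}{1-\gamma}\gamma^{\kappa+1}$). That factor is not a defect of your reasoning but a looseness in the paper: since Lemma~\ref{lemma:epsilon} as stated does not fix the value of $\epsilon$, your argument proves the lemma, and it exposes that the constant asserted in Theorem~\ref{theorem:main} is not justified by the paper's own one-line argument either. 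Your observation about arbitrary deviations being handled because $Q_i^{\hat{\pi}^*}(s,\cdot)$ is within $k\gamma^{\kappa+1}$ of a function of local information alone is likewise a genuine gap in the paper that your sketch closes, though to make it fully rigorous you would still need to verify that the best unrestricted response improves on the best local response by at most that amount uniformly over $s$.
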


\begin{proof}
Using Lemma \ref{lemma:truncatedQ}, we can bound the difference between the optimal truncated Q-function and the optimal true Q-function as follows:
\end{proof}

\begin{equation*}
\begin{split}
|\hat{Q}_i^{\pi*}(s_{\mathcal{N}_i^\kappa}, a_{\mathcal{N}_i^\kappa}) - Q_i^{\pi*}(s, a)| = \Big| \sum_{s'_{\mathcal{N}_{-i}^\kappa}, a'_{\mathcal{N}_{-i}^\kappa}} &w_i(s'_{\mathcal{N}_{-i}^\kappa}, a'_{\mathcal{N}_{-i}^\kappa}; s_{\mathcal{N}_{i}^\kappa}, a_{\mathcal{N}_{i}^\kappa}) Q_i^{\pi*}(s_{\mathcal{N}_{i}^\kappa}, s'_{\mathcal{N}_{-i}^\kappa}, a_{\mathcal{N}_{i}^\kappa}, a'_{\mathcal{N}_{-i}^\kappa}) - \\ 
&- Q_i^{\pi*}(s_{\mathcal{N}_{i}^\kappa}, s_{\mathcal{N}_{-i}^\kappa}, a_{\mathcal{N}_{i}^\kappa}, a_{\mathcal{N}_{-i}^\kappa}) \Big| \\
\stackrel{\text{from Lemma \ref{lemma:truncatedQ}}}{\leq} &k\gamma^{\kappa+1}
\end{split}
\end{equation*}

Finally, to get the value of $\epsilon$ as a function of $\kappa$, we set $\epsilon$ to the upper bound of the difference between the truncated and true optimal Q-functions since the upper bound defines the largest size of the neighborhood around the true optimal Q-function. This yields $\epsilon = \frac{r_{max}}{1-\gamma} \gamma^{\kappa+1}$. \rightline{$\blacksquare$}

\section{Illustrative Examples}
\label{sec:examples}
In this section, we present two illustrative examples to demonstrate the feasibility of the theory discussed in the previous sections. The examples aim to provide scenarios where the presented control strategies can be effectively implemented.

\subsection{Job Balancing Game Problem}
The first illustrative example considers a form of a job balancing game problem, in a networked structure. Fig. \ref{fig:ex1} displays the diagram of the utilized network, including the connections between the different agents. 

\begin{figure}
\begin{center}
\includegraphics[width=0.4\linewidth,keepaspectratio]{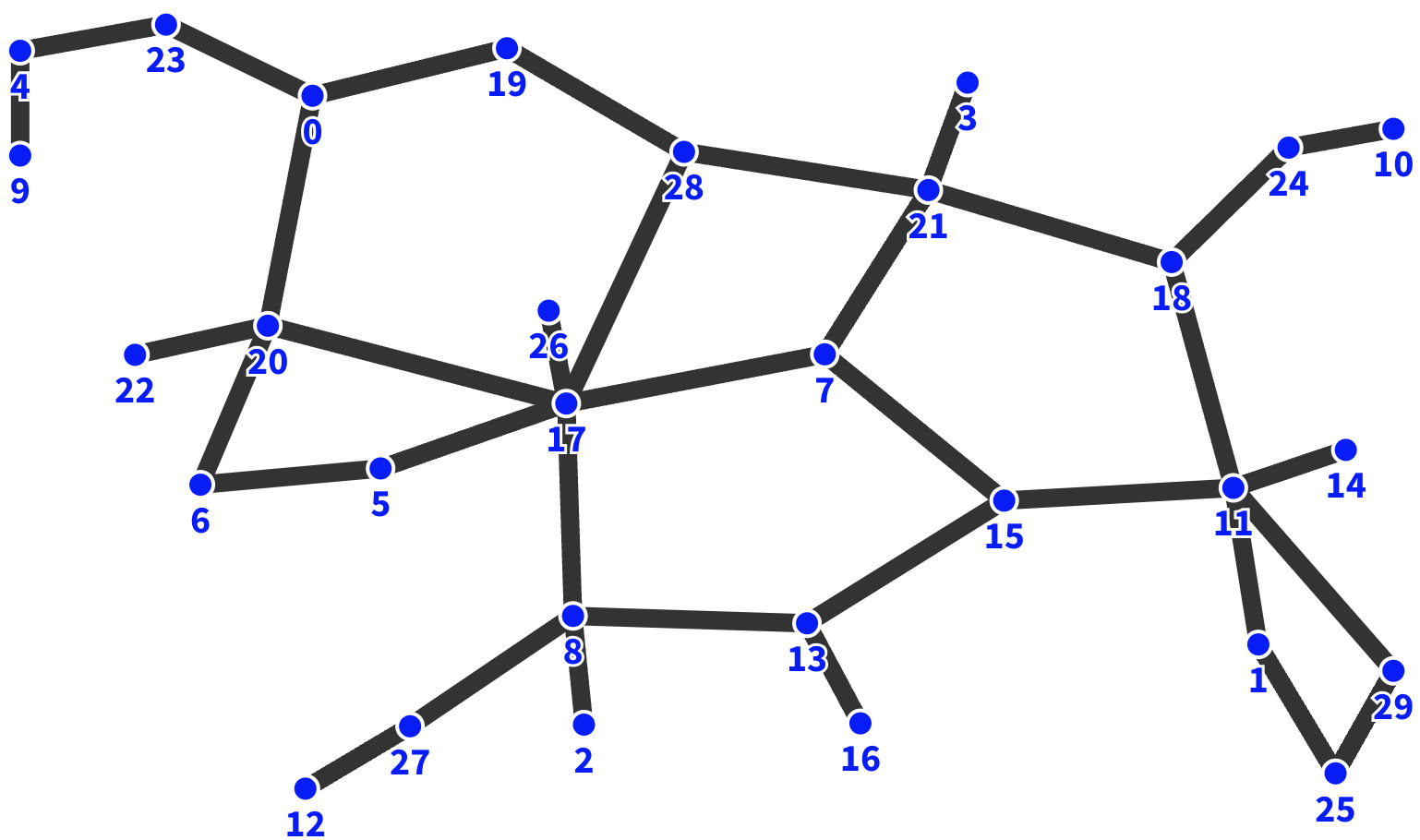}
\caption{Job Balancing Game Network Diagram}
\label{fig:ex1}
\end{center}
\end{figure}

For the job balancing game problem, we consider 30 agents (i.e. 30 nodes) that are expected to balance the total number of jobs in the network amongst all agents, which is assumed to be 60 jobs. This means that the congestion of jobs at a single node should be minimized. In this example, we assume that we have the network structure in Fig. \ref{fig:ex1}.

\textit{States and Dynamics}. The state of agent $i$ is the number of jobs currently being served by that agent. The initial state distribution is uniform. The state transitions of agent $i$ depends on the state and action of agent $i$, as well as the states and actions of its neighborhood, as shown in (\ref{eq:statetransex1}). For simplicity, the state transitions are assumed to be deterministic, meaning that each agent will transition from one specific state to another state with probability 1.

\begin{equation}
\label{eq:statetransex1}
s_i^{t+1} = s_i^t - a_i^t + \sum_{a \in a_{\mathcal{N}_i}} a_j
\end{equation}

\textit{Actions}. The action space of each agent is the number of jobs that it will delegate to neighboring agents, which is the same for all agents.

\textit{Rewards}. The reward function of each agent is defined in (\ref{eq:ex1_rew1}), where $s_i$ denotes the number of jobs assigned for agent $i$. This reward function is based on the deviation of the number of jobs for agent $i$ from the average number of jobs across all neighboring agents. The goal is to minimize this deviation to balance the workload. The discount factor is set to 0.9.

\begin{equation}
\label{eq:ex1_rew1}
r_i(s_{\mathcal{N}_i}, a_{\mathcal{N}_i}) =
    \begin{cases}
        \frac{1}{\Big| s_i - \frac{1}{|\mathcal{N}_i|} \sum_{j \in \mathcal{N}_i} s_j \Big|} & \text{if } s_i \neq \frac{1}{|\mathcal{N}_i^|} \sum_{j \in \mathcal{N}_i} s_j \\
        1 & \text{if } s_i = \frac{1}{|\mathcal{N}_i|} \sum_{j \in \mathcal{N}_i} s_j
    \end{cases}
\end{equation}

Fig. \ref{fig:ex1_res1} shows the convergence of the modified independent natural policy gradient algorithm for the job balancing game example. In this setting, when $\kappa$ is equal to 10, the agents have full state and action information. As expected, as the value of $\kappa$ increases, the amount of information available for the agent is higher, which decreases the error between the true optimal policy and the truncated optimal policy. Furthermore, the results prove the possibility of using truncated information in the setting of MPGs due to the relatively small error compared to the decreased information requirements.

\begin{figure}
\begin{center}
\includegraphics[width=0.4\linewidth,keepaspectratio]{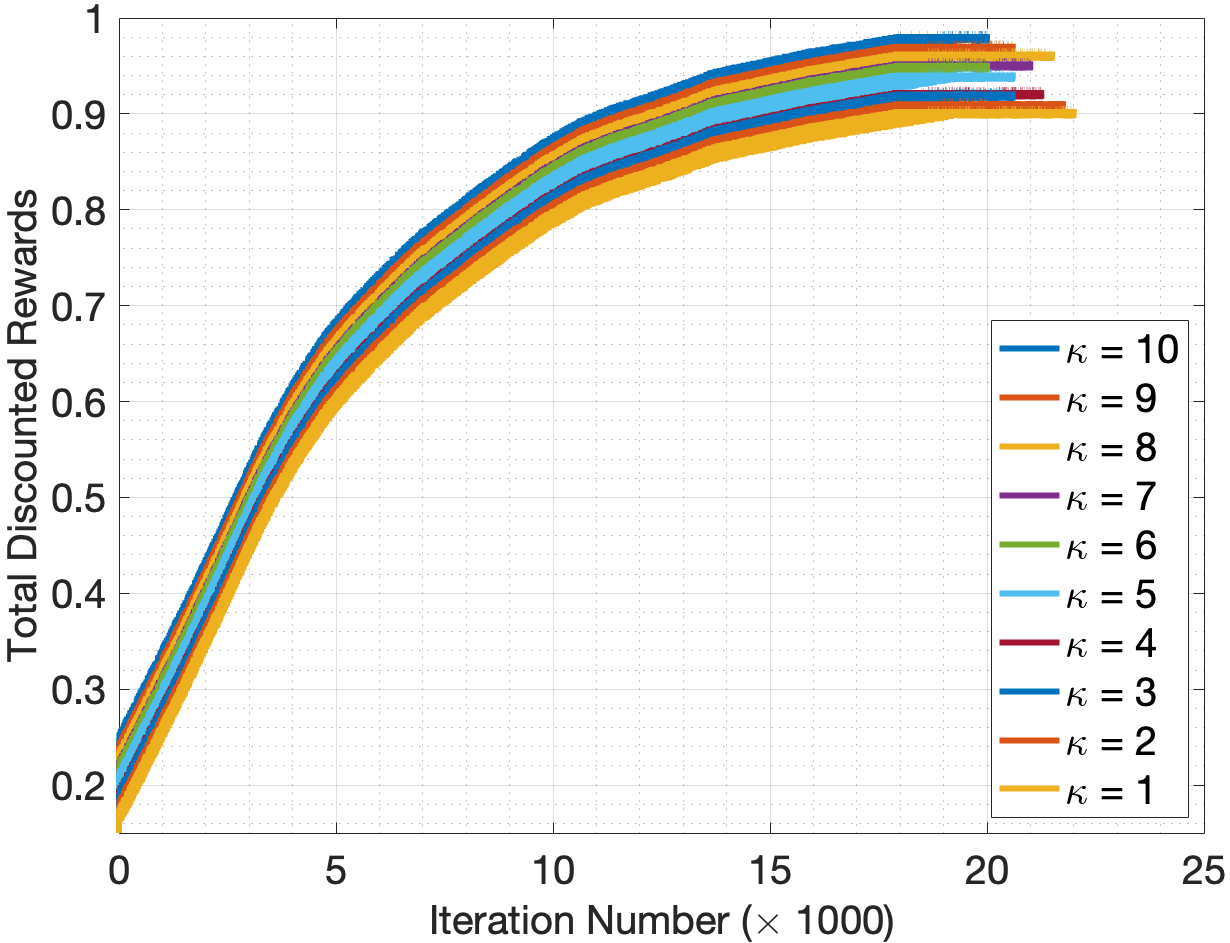}
\caption{Convergence Results of Independent Natural Policy Gradient for Job Balancing Game Problem}
\label{fig:ex1_res1}
\end{center}
\end{figure}

Fig. \ref{fig:ex1_res2} shows the percentage relative error based on $\kappa$ after the convergence of the modified independent natural policy gradient algorithm for the job balancing game example. As shown in the figure, as the value of $\kappa$ increases, the value of $\epsilon$ decreases as expected due to the availability of more information.

\begin{figure}
\begin{center}
\includegraphics[width=0.4\linewidth,keepaspectratio]{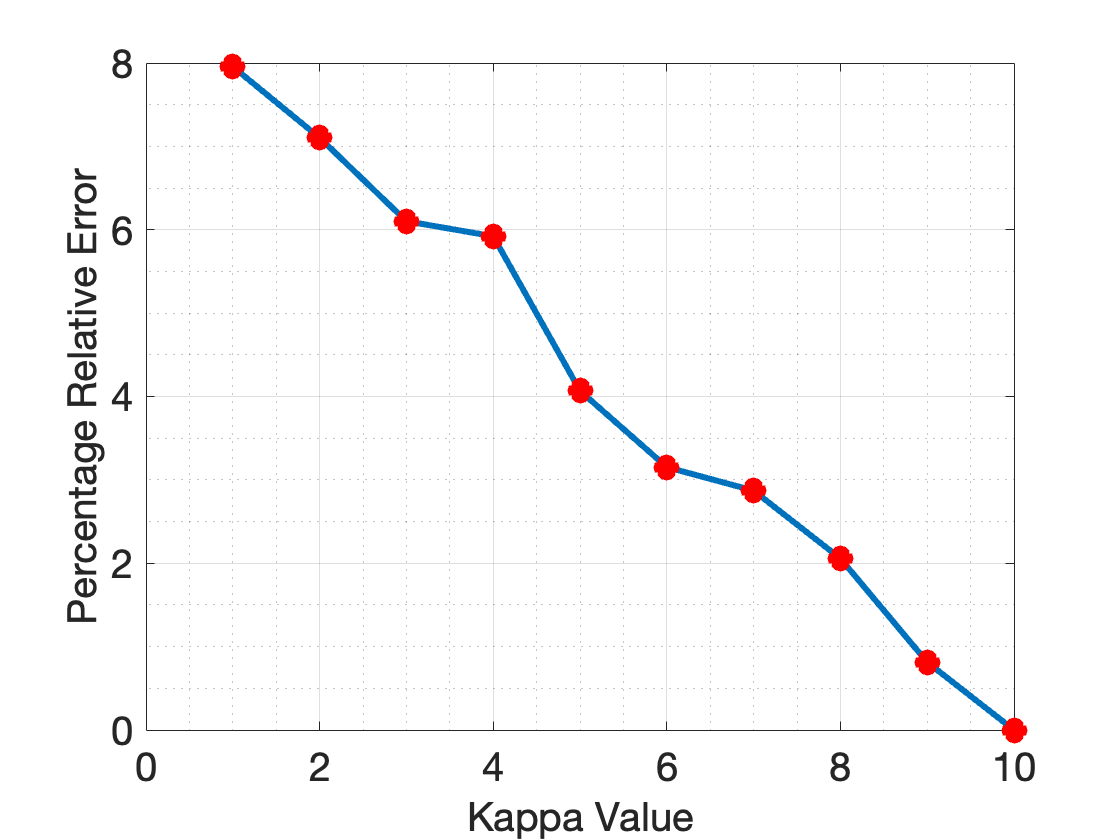}
\caption{Percentage relative error of $\epsilon$ based on $\kappa$ for Job Balancing Game Problem}
\label{fig:ex1_res2}
\end{center}
\end{figure}

\subsection{Sensor Coverage Problem}
The second illustrative example considers a sensor coverage problem. The formulation of the problem, including the agents' utility functions is inspired by the work in \cite{jm2, jm1}. Fig. \ref{fig:ex2} displays the diagram for the node network environment. As seen in the figure, for simplicity, we consider a ring environment with a total of 20 agents, and each agent is linked to two other agents only.

\textit{States and Dynamics}. The state of agent $i$ is the location of the agent, which is a finite set from 0 to 24, imitating the cells of a 5x5 grid world. The initial state distribution is uniform. The state transitions of agent $i$ occurs depending on the state and action of agent $i$ to move around the grid world. The state transitions are assumed to be probabilistic. Agents can only move to adjacent cells. If agents are not at the boundary cells, each action will take the agent to the expected cell with probability 0.85 and to one of the remaining three cells with probability 0.05 each. If agents are at the boundary, they will remain at their current state if they take an action that would have taken them outside the grid world.

\begin{figure}[!htbp]
\begin{center}
\includegraphics[width=0.4\linewidth,keepaspectratio]{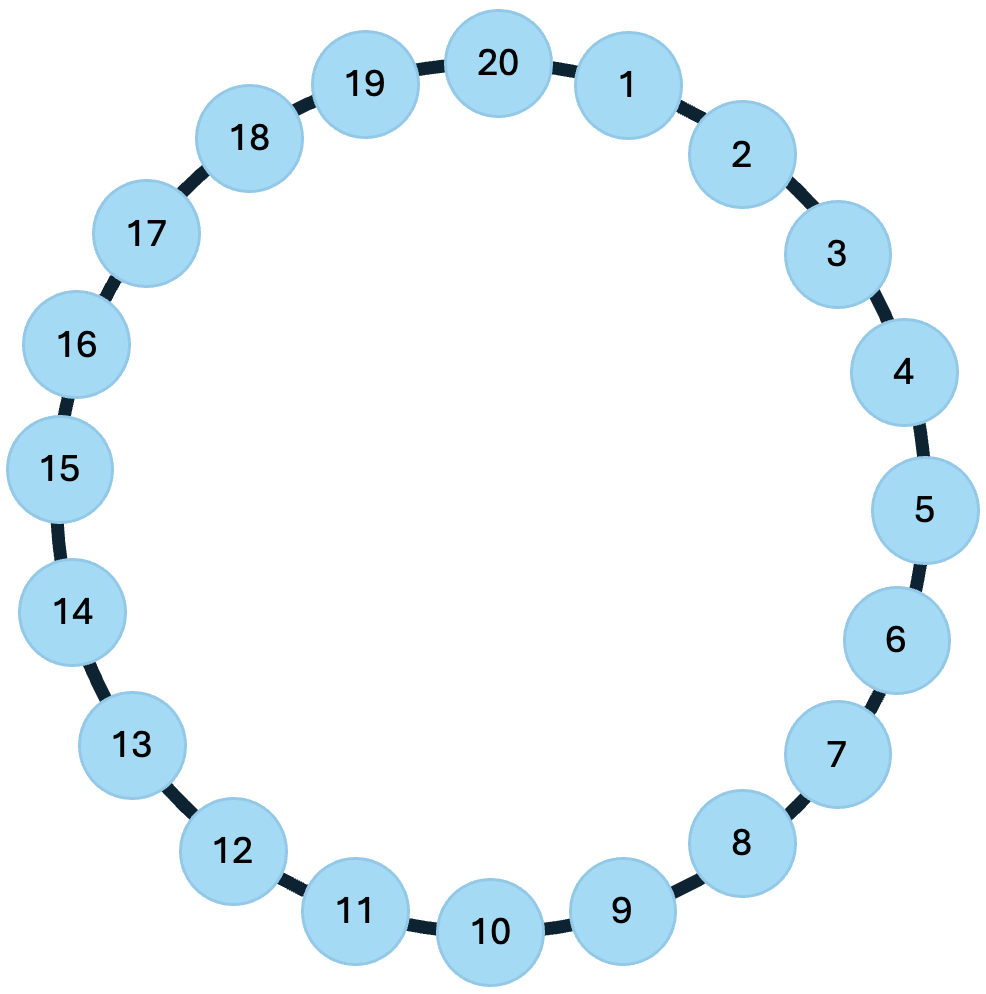}
\caption{Sensor Coverage Node Diagram}
\label{fig:ex2}
\end{center}
\end{figure}

\textit{Actions}. Further, the action space of of all agents is the same. More specifically, the actions of each agents corresponds to the movement across the environment, which can be right, left, up, and down.

\textit{Rewards}. The reward function of each agent is defined as shown in (\ref{eq:ex2_rew}), which is based on the work in \cite{jm2, jm1}. The discount factor is set to 0.9.

\begin{equation}
\label{eq:ex2_rew}
\begin{split}
r_i(s_{\mathcal{N}_i}, a_{\mathcal{N}_i}) &= \sum_{j \in \mathcal{N}_i} f_j(s_j, a_j) \\
&= \sum_{j \in s_{\mathcal{N}_i}} p_i(a_i, s_i) \prod_{j \in \mathcal{N}_i} (1-p_j(a_j, s_j)) \\
&= p_i(a_i, s_i) |\mathcal{N}_i| \prod_{j \in \mathcal{N}_i} (1-p_j(a_j, s_j))
\end{split}
\end{equation}

Fig. \ref{fig:ex2_res1} shows the convergence of the modified independent natural policy gradient algorithm for the sensor coverage example. In this setting, when $\kappa$ is equal to 10, the agents have full state and action information. As the value of $\kappa$ increases, the error between the true optimal policy and the truncated optimal policy decreases as expected. Additionally, the results prove the possibility of using truncated information in the setting of MPGs due to the relatively small error compared to the decreased information requirements.

\begin{figure}[!htbp]
\begin{center}
\includegraphics[width=0.4\linewidth,keepaspectratio]{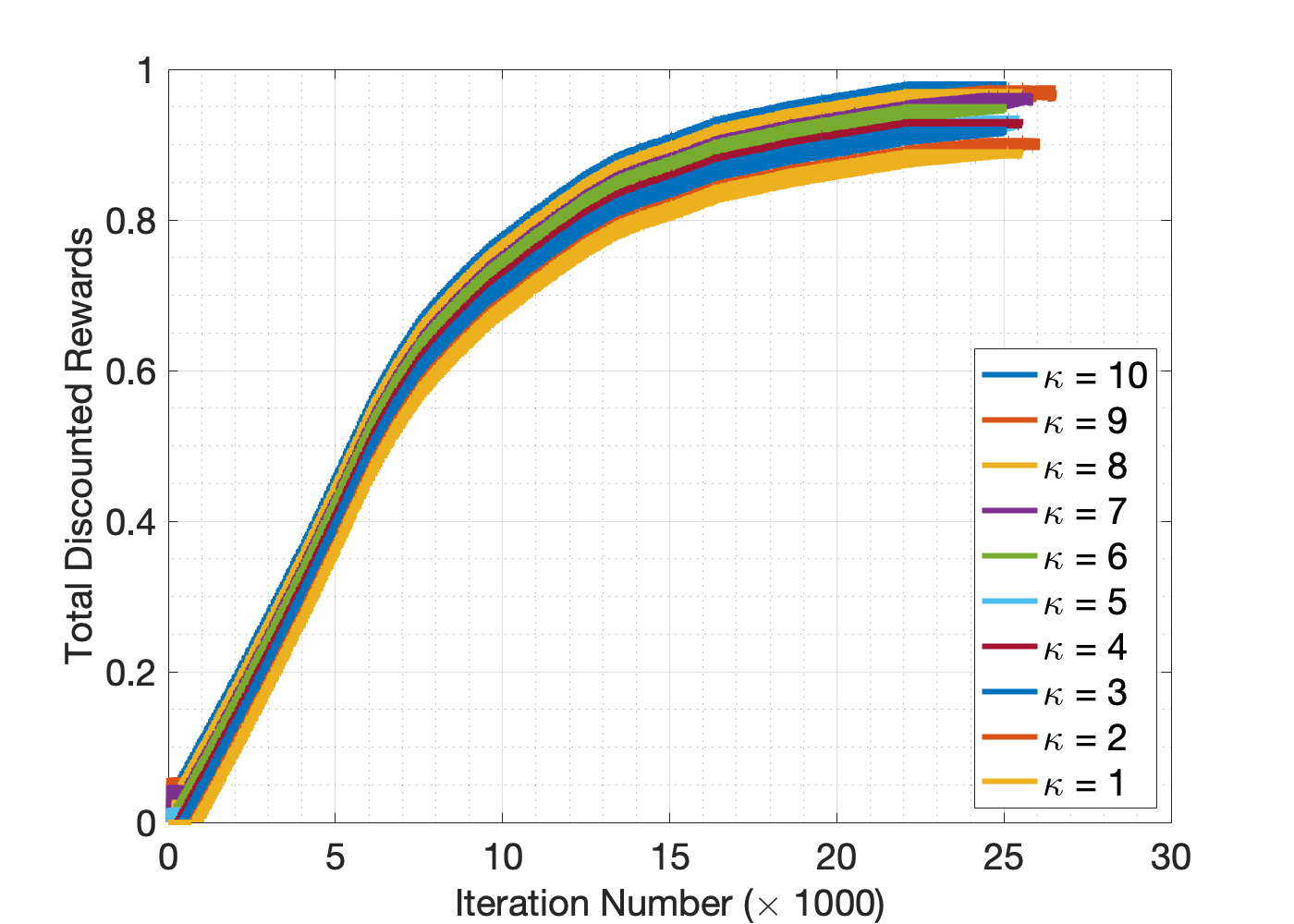}
\caption{Convergence Results of Independent Natural Policy Gradient for Sensor Coverage Problem}
\label{fig:ex2_res1}
\end{center}
\end{figure}

Fig. \ref{fig:ex2_res2} shows the percentage relative error based on $\kappa$ after the convergence of the modified independent natural policy gradient algorithm for the sensor coverage example. As seen in Fig. \ref{fig:ex2_res2}, as the value of $\kappa$ increases, the value of $\epsilon$ decreases due to the availability of more information.

\begin{figure}
\begin{center}
\includegraphics[width=0.4\linewidth,keepaspectratio]{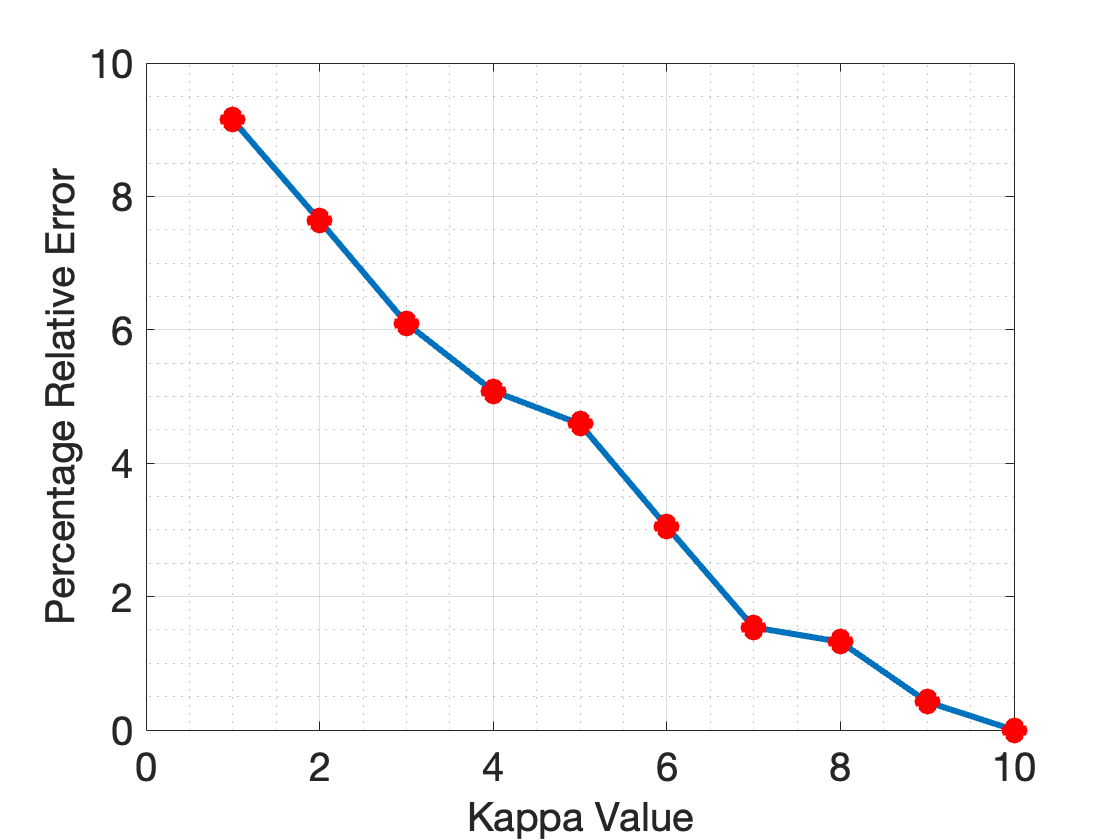}
\caption{Percentage relative error based on $\kappa$ for Sensor Coverage Problem}
\label{fig:ex2_res2}
\end{center}
\end{figure}

\section{Conclusion \& Future Work}
\label{sec:conclusions}
In conclusion, we demonstrated the significant prospects of distributed game theoretic approaches, particularly considering Markov potential games, in addressing the challenges of scalability and adaptability in multi-agent systems. By designing local agent objective functions that align with the global objectives, we ensure that distributed independent learning leads to convergence to optimal strategies for the entire system. The theoretical foundations and the illustrative example underline the effectiveness of our approach. This opens up promising avenues for further exploration and application across various domains where coordination of dynamic, multi-agent systems is crucial.

Future work can encompass the generalization of the problem to general Markov games, as well as the generalization of the model to continuous state and action spaces such as the LQ setting. Moreover, another research direction is the application of our methodology to more complex environments, thereby enhancing the robustness and applicability of distributed control laws in real-world scenarios.

\section*{Acknowledgments}
Start acknowledgments here...

\bibliographystyle{IEEEtran}
\bibliography{refs}

\end{document}